\theoremstyle{plain}
\newtheorem{theorem}{Theorem}[section]
\newtheorem{proposition}[theorem]{Proposition}
\newtheorem{lemma}[theorem]{Lemma}
\newtheorem{corollary}[theorem]{Corollary}
\theoremstyle{definition}
\newtheorem{definition}[theorem]{Definition}
\theoremstyle{remark}
\newtheorem{remark}[theorem]{Remark}
\title{Fast Sampling Based Sketches for Tensors\thanks{Research of William Swartworth and David P. Woodruff was supported by a Simons Investigator Award, a Google Faculty Award, and NSF Grant No. CCF-2335412.}}
\author{William Swartworth (CMU) \thanks{wswartwo@andrew.cmu.edu}
\and
David P. Woodruff (CMU) \thanks{dwoodruf@cs.cmu.edu}}
\date{}
\begin{document}
\maketitle

\begin{abstract}
We introduce a new approach for applying sampling-based sketches to two and three mode tensors.  We illustrate our technique to construct sketches for the classical problems of $\ell_0$ sampling and producing $\ell_1$ embeddings. In both settings we achieve sketches that can be applied to a rank one tensor in $(\mathbb{R}^d)^{\otimes q}$ (for $q=2,3$) in time scaling with $d$ rather than $d^2$ or $d^3$.  Our main idea is a particular sampling construction based on fast convolution which allows us to quickly compute sums over sufficiently random subsets of tensor entries.
\end{abstract}

\section{Introduction}

In the modern area of enormous data sets, space is often at a premium, and one would like algorithms that either avoid storing all available data, or that compress existing data.  A common and widely-applied strategy is \textit{sketching}.  Given a vector $x \in \R^n$ consisting of the relevant data, a (linear) sketch of $x$ is given by $Sx$ where $S$ is a linear map down to a dimension much smaller than $n$.  Typically the goal is to design $S$ so that some useful statistic of $x$ can be computed from the sketched vector $Sx$, even though most of the information about $x$ has been discarded.  Linear sketches are particularly useful in the context of streaming algorithms, since linear updates to $x$ can be translated to the sketch, simply by sketching the vector of updates, and adding it to the previous value of the sketch. Sketches have also found important applications in speeding up linear-algebraic (and related) computations \cite{woodruff2014sketching}.  Here, the idea is to first apply a sketch to reduce the dimensionality of the problem, while approximately preserving a quantity of interest (e.g., a regression solution). Then standard algorithms can be applied to the smaller problem resulting in improved runtimes.

A lot of work on sketching has focused on efficiently applying sketches to structured data.  For example, if the underlying data is sparse, one might hope for a sketch that can be applied in input-sparsity time.  A different type of structure that has been widely studied in this context is tensor structure. A line of work has been devoted to developing fast tensor sketches \cite{pham2013fast, ahle2020oblivious, ahle2019almost, meister2019tight}, which are sketches that can be applied quickly to low-rank tensors.  A rank-one tensor in $\R^n \otimes \R^n$ for example (i.e., a rank-one matrix) only requires $O(n)$ parameters to specify.  However, a naive sketch might require $O(n^2)$ time to apply if each entry of the tensor must be calculated to form the sketch.  The goal with tensor sketches is to do better then expanding into a vector form prior to sketching.

Much of the work based on sketching tensors has focused on $\ell_2$-norm settings.  The most studied example is Johnson-Lindenstrauss (JL) embeddings, which have been studied extensively for tensors.  For JL sketches, the ideal sketch is a dense Gaussian sketch.  However, such sketches are slow to apply and so the main interest is in approximating the properties of a Gaussian sketch with a sketch that can be applied faster.

We study a completely different type of ``sampling-based" sketch, which is embodied by the $\ell_0$ sampling problem.  The goal here is to observe a linear sketch of $x$, and then to (nearly) uniformly sample an element from $x$'s support.  A dense Gaussian sketch is completely useless as we would like a row of our sketch to single out an element of $\supp(x)$.  To allow for this, the standard idea is to take a sketch which performs sparse samples at various scales.  We ask: \textit{Are there sampling-based sketches which can be applied quickly to rank-one tensors?}  

For tensors with two and three modes (which cover many tensors of practical importance), we provide a new approach for constructing sampling-based sketches in the tensor setting.  To illustrate our approach, we focus on two fundamental problems: $\ell_0$ sampling, and $\ell_1$ embeddings. We recall the setup for these problems.

\paragraph{$\ell_0$ Sampling.} For the $\ell_0$ sampling problem one would like to construct a distribution over sketching matrices $\mathbf{S}$ so that observing $\mathbf{S} x$ allows us to return a uniformly random element from the support of $x$.  We allow constant distortion, and $\delta$ failure probability, so conditioned on not failing (which must occur with probability at least $1-\delta$), we should have that for all $i$ in the support of $x$, the sampling algorithm outputs $i$ with probability in $[c_1/|\supp(x)|, c_2/|\supp(X)|]$.

\paragraph{$\ell_1$ Embeddings.}  For the $\ell_1$ embedding problem, the goal is to construct a sketch $\mathbf{S}$, such that for any $x\in\R^n$, we have \[c_1 \norm{x}{1} \leq \norm{\mathbf{S}x}{1} \leq c_2 \norm{x}{1}\] for absolute constants $0 < c_1 < c_2.$

%



\subsection{Our Results}

Our main idea is constructing a way to subsample a tensor at a given sampling rate $p$, and then to sum over the resulting sampling.  Specifically we introduce a sampling primitive that we call a ``$p$-sample" which samples each element with probability approximately $p$, and does so in a nearly pairwise independent manner.  As we show, this will be sufficient for constructing both $\ell_0$-samplers and $\ell_1$ embeddings.  Constructing $p$-samples is straightforward for tensors which are given entrywise -- one can simply sample each entry independently with probability $p.$  Our key novelty is in showing that it is possible to construct $p$-samples for two and three mode tensors, which for rank-one tensors can be summed over in nearly linear time.  We discuss the ideas for constructing samples below.

Given our sampling primitive, we show how to construct $\ell_0$ samplers and $\ell_1$ embeddings.

For the $\ell_0$ sampling problem we show the following result:

\begin{theorem}
For $q = 2,3$ there there is a linear sketch of $X\in \R^{n^q}$  with sketching dimension  $m = O(\log\frac{1}{\delta}\log^2 n (\log\log n + \log\frac{1}{\delta}))$ space, and a sampling algorithm that succeeds with probability $1-\delta$, which, conditioned on succeeding, outputs an index $i$ in $\supp(X).$  Conditioned on succeeding, the probability that the algorithm outputs $i\in \supp(X)$ is in $[\frac{0.5}{|\supp(X)|}, \frac{2}{|\supp(X)|}].$  The algorithm also returns the value of $X_i.$  Moveover, the entries of the sketching matrix can be taken to be in $\{0,+1,-1\}.$

For $q=2$, our sketch can be applied to rank-one tensors in $O(m n)$ time, and when $q=3$ can be applied in $O(m n\log^2 n)$ time.
\end{theorem}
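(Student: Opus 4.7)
The plan is to follow the classical $\ell_0$-sampling framework (in the style of Jowhari--Saglam--Tardos) but replace the usual entrywise Bernoulli subsampling with the $p$-sample primitive developed earlier in the paper. The sketch consists of $O(\log n)$ levels indexed by geometric sampling rates $p_k = 2^{-k}$, each repeated $O(\log(1/\delta))$ times for failure amplification, and within every level a constant number of index moments plus $O(\log\log n + \log(1/\delta))$ verification hashes to support 1-sparse recovery and rule out false positives after a union bound over levels.

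Concretely, at level $k$ and each repetition, I invoke an independent $p_k$-sample to obtain a random subset $T \subseteq [n]^q$ with inclusion probability approximately $p_k$ and near pairwise independence. Into the sketch I store
\[
S_0 = \sum_{\mathbf{i} \in T} X_\mathbf{i}, \qquad S_j = \sum_{\mathbf{i} \in T} i_j \cdot X_\mathbf{i} \;\; (1 \le j \le q), \qquad H_\ell = \sum_{\mathbf{i} \in T} h^{(\ell)}(\mathbf{i})\, X_\mathbf{i},
\]
where each $h^{(\ell)}(\mathbf{i}) = \prod_{j=1}^q h_j^{(\ell)}(i_j)$ is a multiplicatively separable pairwise independent hash with values in $\{\pm 1\}$. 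Because $X$ is rank one, every entry factors as $\prod_j u^{(j)}_{i_j}$, and each of the stored sums has the form $\sum_{\mathbf{i} \in T} \prod_j f_j(i_j)$, which is exactly what the $p$-sample primitive evaluates in $O(n)$ time for $q = 2$ and $O(n\log^2 n)$ time for $q = 3$. Multiplying by the sketch dimension $m$ gives the claimed overall runtimes, and keeping all hash ranges in $\{0,\pm 1\}$ yields the entry constraint.

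Decoding is standard: scan levels from coarsest to finest, compute the candidate multi-index $\mathbf{i}^\star = (S_1/S_0, \ldots, S_q/S_0)$, and accept the first level at which every verification hash satisfies $H_\ell = h^{(\ell)}(\mathbf{i}^\star)\, S_0$; output $\mathbf{i}^\star$ with value $S_0$. Near pairwise independence of $T$ restricted to $\supp(X)$ controls the first two moments of $|T \cap \supp(X)|$, so at the unique level $k^\star$ with $p_{k^\star}|\supp(X)| = \Theta(1)$ the event ``exactly one support element lies in $T$'' has constant probability; the $[0.5/|\supp(X)|,\, 2/|\supp(X)|]$ distortion bound follows from the almost-uniform marginal inclusion weights inherited from the $p$-sample, and $O(\log(1/\delta))$ independent repetitions drive the failure probability below $\delta$.

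The main obstacle is controlling false positives at coarser levels, where two or more support elements can land in $T$ and must be rejected by the verification hashes. Because the $p$-sample forces $h^{(\ell)}$ to factor across modes, I must draw the component hashes $h_j^{(\ell)}$ independently from a pairwise independent family over a field of size $\poly(n)$ so that any fixed pair of distinct multi-indices collides with probability $1/\poly(n)$; a union bound over $O(\log n)$ levels, $O(\log(1/\delta))$ repetitions, and the at most $n^{2q}$ relevant support pairs then requires $O(\log\log n + \log(1/\delta))$ independent verification hashes per level, which is where the matching factor in the sketch dimension comes from. A secondary concern is that the $p$-sample is only approximately pairwise independent, but the construction's $1/\poly(n)$ deviation is absorbed into the constants hidden in the first- and second-moment computations, so the Jowhari--Saglam--Tardos analysis goes through essentially unchanged.
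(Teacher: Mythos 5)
Your high-level skeleton matches the paper's: geometric sampling rates realized by the $p$-sample primitive, $O(\log\frac{1}{\delta})$ independent repetitions per rate, a per-cell $1$-sparse recovery plus verification step, and the near-uniformity argument from approximate pairwise independence at the level where $p\,|\supp(X)|=\Theta(1)$. The divergence, and the gap, is in the recovery/verification gadget. The paper recovers the isolated index purely from the \emph{sign pattern} of $O(\log n \log\frac{1}{\delta'})$ Khatri--Rao $\pm 1$ measurements $\inner{\sigma_1^{(i)}\otimes\cdots\otimes\sigma_q^{(i)}}{X}$, and tests singletons by recovering a candidate $Y$ and checking $X-Y=0$ with further sign measurements; every sketch entry is then a product of signs times a sample indicator, hence in $\{0,+1,-1\}$. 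Your gadget instead stores index moments $S_j=\sum_{\mathbf{i}\in T} i_j X_{\mathbf{i}}$, whose rows have entries as large as $n$, so your construction does not deliver the theorem's claim that all sketch entries lie in $\{0,+1,-1\}$; the remark that ``keeping all hash ranges in $\{0,\pm1\}$ yields the entry constraint'' covers only the $H_\ell$ rows, not the $S_j$ rows.

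The second problem is the verification accounting, which is internally inconsistent. A multiplicatively separable hash $h^{(\ell)}(\mathbf{i})=\prod_j h_j^{(\ell)}(i_j)$ with values in $\{\pm1\}$ gives two distinct multi-indices the same value with probability $1/2$, not $1/\poly(n)$; to reach $1/\poly(n)$ per verification you either need $\Omega(\log n)$ independent $\pm1$ rows per hash (which restores the paper's $\log^2 n$ factor and makes your claimed total of $O(\log\log n+\log\frac{1}{\delta})$ rows per level too small by a $\log n$ factor to carry the union bound) or a hash over a field of size $\poly(n)$ (which again violates the $\{0,\pm1\}$ entry claim). Relatedly, the false-positive event at coarse levels is not a pairwise collision event: when $|T\cap\supp(X)|\ge 2$ the check $H_\ell=h^{(\ell)}(\mathbf{i}^\star)S_0$ must reject a signed sum over arbitrarily many support entries, which is why the paper argues inductively along modes that each Khatri--Rao sign measurement of a nonzero tensor is nonzero with probability at least $2^{-q}$ and then repeats $O(2^q\log\frac{1}{\delta'})$ times. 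Your union bound over ``at most $n^{2q}$ relevant support pairs'' does not address this event. These pieces are all repairable by substituting the paper's sign-pattern recovery and equality tester, but as written the proposal neither achieves the stated sketching dimension bookkeeping nor the $\{0,+1,-1\}$ entry guarantee.
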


For $\ell_1$ embeddings, we show the following:

\begin{theorem}
There is an $O(1)$-distortion $\ell_1$ embedding sketch $\ellonesketch$ $\R^n \rightarrow \R^m$ with sketching dimension $m = O(\log^4 n + \log^2(1/\delta)\log n)$ that satisfies $\norm{\ellonesketch x}{1} \leq c \norm{x}{1}$ with constant probability, and satisfies $\norm{\ellonesketch x}{1} \geq c \norm{x}{1}$ with probability at least $1-\delta.$

Moreover our sketch can be applied to rank one tensors in $\R^{k}\otimes \R^{k}$ in $O(m k)$ time, and to rank one tensors in $\R^{k}\otimes \R^{k} \otimes \R^k$ in $O(m k\log^2 k)$ time.
\end{theorem}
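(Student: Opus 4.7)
My plan is to build $\ellonesketch$ as a concatenation of sub-sketches at the $O(\log n)$ geometric sampling rates $p_j = 2^{-j}$ for $j = 0, 1, \ldots, \lceil \log_2 n \rceil$. At each rate I would take $\Theta(\log^3 n)$ signed $p_j$-samples produced by the primitive built in the earlier sections, each of the form $y_{j,k} = \alpha_j \sum_{i \in T_{j,k}} \epsilon_{j,k,i} x_i$, where $T_{j,k}$ is a $p_j$-sample, the $\epsilon_{j,k,i}$ are independent random signs, and $\alpha_j$ is a rate-dependent weight chosen so that each magnitude scale of $x$ contributes its fair share to $\|\ellonesketch x\|_1$. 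For the high-probability direction I then add $\Theta(\log^2(1/\delta))$ additional $p_j$-samples per rate, giving a total sketching dimension of $O(\log^4 n + \log^2(1/\delta)\log n)$ as claimed.

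For the upper bound $\|\ellonesketch x\|_1 \leq c \|x\|_1$ with constant probability I would bound the second moment of each row using the near-pairwise independence of $T_{j,k}$ together with the independence of the signs, yielding $\E[y_{j,k}^2] = O(p_j \alpha_j^2 \|x\|_2^2)$ and therefore $\E[|y_{j,k}|] = O(\sqrt{p_j}\,\alpha_j \|x\|_2)$. Tuning the $\alpha_j$ to balance across levels, summing over all rows, and using $\|x\|_2 \leq \|x\|_1$ produces $\E[\|\ellonesketch x\|_1] = O(\|x\|_1)$, and Markov's inequality then finishes the upper bound with constant probability.

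The lower bound is where the main work lies. The plan is to partition the coordinates of $x$ into the magnitude level sets $L_\ell = \{i : |x_i| \in (2^{-\ell-1}\|x\|_\infty, 2^{-\ell}\|x\|_\infty]\}$ and single out the $O(\log n)$ level sets $L_\ell$ that contribute non-negligibly to $\|x\|_1$. For each such heavy level with $s_\ell := |L_\ell|$, the matching rate $p_{j(\ell)} \approx 1/s_\ell$ has the property that a constant fraction of its $\Theta(\log^3 n)$ signed $p_{j(\ell)}$-samples isolate a single element of $L_\ell$ in $T_{j(\ell),k}$; this is verified by a Poissonization-style second-moment calculation relying on the near-pairwise independence of the primitive. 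On such an isolating row the $L_\ell$-term has magnitude $\Theta(2^{-\ell}\|x\|_\infty)$, while the contribution from every other level set is controlled by the same second-moment estimate used in the upper direction. A Chebyshev/Bernstein concentration argument across the $\Theta(\log^3 n)$ rows at rate $p_{j(\ell)}$ then yields an $\Omega(s_\ell \cdot 2^{-\ell}\|x\|_\infty)$ contribution to $\|\ellonesketch x\|_1$ with probability $1-1/\poly(n)$; union bounding over the $O(\log n)$ heavy levels and boosting from $1/\poly(n)$ to $\delta$ via the $\Theta(\log^2(1/\delta))$ extra amplification samples per rate completes the lower bound.

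The runtime follows by applying the $p$-sample primitive row by row: by the construction from the earlier sections a single $p$-sample can be summed against a rank-one tensor in $O(k)$ time for $q=2$ and in $O(k \log^2 k)$ time for $q=3$, and the random $\pm 1$ signs and the scalings $\alpha_j$ can be absorbed without increasing the complexity, giving the stated $O(mk)$ and $O(mk \log^2 k)$ bounds. The main obstacle in executing this plan is the \emph{inter-level interference} analysis in the lower bound: showing that on the isolating rows at rate $p_{j(\ell)}$ the noise contributed by the other level sets $L_{\ell'}$ is small enough not to wash out the $L_\ell$-term, and doing so simultaneously for every heavy level of $x$. The fact that the $p$-sample primitive only provides near-pairwise independence rather than full independence is what forces this step to rely on second-moment / variance control instead of a cleaner Chernoff-type argument.
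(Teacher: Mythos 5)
Your overall architecture (geometric sampling rates, signed $p$-samples per rate, isolation of heavy level sets for the lower bound) matches the paper's, but there are two genuine gaps. First, you take the signs $\epsilon_{j,k,i}$ to be \emph{independent per tensor entry} and then assert the signs ``can be absorbed without increasing the complexity.'' That is false for the runtime claim: an entrywise-independent sign pattern destroys the rank-one structure, so the convolution-based fast summation over a $p$-sample no longer applies, and you are back to $\Omega(k^q)$ time. The paper's fix is to take $\sigma$ to be a Kronecker product of $q$ per-mode sign vectors, each only $4$-wise independent; this preserves rank-one structure but costs you full independence, so your ``independence of the signs'' steps must be replaced by (i) the weaker fact $\E[\sigma_i\sigma_j]=0$ for $i\neq j$ (enough for the second-moment/cancellation bounds) and (ii) an iterated AMS-type argument along the modes to show that, conditioned on isolating a coordinate $j$, the bucket value is still $\Omega(|x_j|)$ with constant probability. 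This modewise sign structure is the crux of making the $\ell_1$ construction fast, and your proposal elides it.

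Second, your no-dilation argument does not close. The normalization forced by your own lower bound is $\alpha_j=\Theta(1/(R\,p_j))$ with $R=\Theta(\log^3 n)$ rows per rate (so that the $\Theta(R)$ isolating rows at rate $p_{j(\ell)}\approx 1/s_\ell$ recover $\Omega(\norm{x_{L_\ell}}{1})$). With that normalization, summing $\E|y_{j,k}|\le \sqrt{p_j}\,\alpha_j\norm{x}{2}$ over the $R$ rows at rate $j$ gives $\norm{x}{2}/\sqrt{p_j}$, which is $\poly(n)\cdot\norm{x}{1}$ for small $p_j$ (e.g.\ $x=e_1$); and the alternative first-moment bound $R\cdot p_j\alpha_j\norm{x}{1}=\Theta(\norm{x}{1})$ \emph{per rate} sums to $\Theta(\log n)\cdot\norm{x}{1}$ over all rates, i.e.\ only an $O(\log n)$-distortion embedding. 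To get $O(1)$ distortion you need, at each rate $i$, the three-way split the paper uses (Lemma~\ref{lem:general_dilation_bound}): coordinates above $\beta_i$ are sampled with probability at most $p_i/\beta_i$ (union bound, contributing to the failure probability); coordinates in $[\alpha_i,\beta_i]$ contribute their $\ell_1$ mass in expectation and each coordinate lands in only $O(1)$ such windows; and coordinates below $\alpha_i$ contribute only $\min(\sqrt{T\alpha_i/p_i},1)$ via the cancellation bound $\E|\sum\sigma_i y_i|\le\sqrt{q}\,\norm{y}{2}\le\sqrt{q\,\norm{y}{1}\norm{y}{\infty}}$, which crucially exploits the truncation $\norm{y}{\infty}\le\alpha_i$. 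Your single global second-moment bound with $\norm{x}{2}\le\norm{x}{1}$ cannot substitute for this. The lower-bound outline and the dimension accounting are otherwise consistent with the paper's.
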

Our main novelty here lies in the $p$-sample construction, which is applied in a similar way as for our $\ell_0$-sample, although the details are more complicated.  We therefore defer the proof of this result to the appendix.

\paragraph{Regression} We note that this error guarantee of ``no contraction" with high probability, and ``no dilation" with constant probability is know to be sufficient to construct algorithms for solving $\ell_1$ regression.  In particular, by setting $\delta = \exp(-d)$, a standard net argument given in \cite{clarkson2014sketching} for example, shows that ``no contraction" holds with high probability over a $d$-dimensional subspace.  As long as ``no dilation" holds for the solution vector, then it is well-known that this yields a dimension-reduction for $\ell_1$ regression.  Thus our $\ell_1$ sketch can be used to speed up sketching for an $\ell_1$ regression problem of the form $\min_x \norm{Ax - b}{1},$ where $A$ has $d$ columns, each of which have low-rank structure.


Finally, while we choose to illustrate our approach on the problems of $\ell_0$ sampling and $\ell_1$ embeddings, we believe that this technique could be applied more generally whenever there is a known sketch that is built from random subsampling and taking random linear combinations within the sample.



\subsection{Techniques}

The main idea behind both $\ell_0$-sampling and $\ell_1$-embeddings is to subsample coordinates at various scales in order to isolate some coordinates of $x.$ As a warmup, we begin by describing $\ell_0$ sampling. Then we describe how our techniques extend to constructing $\ell_1$ embeddings. We note that our basic approach to constructing $\ell_0$ samplers and $\ell_1$ embeddings are not novel.  However our ability to quickly sketch rank-one tensors is.

\paragraph{$\ell_0$ Sampling}
In the general form of $\ell_0$ sampling, we are given a vector (or in our case a tensor) $X$, and the goal is to design a sketch that allows us sample an entry of $X$ nearly uniformly from the support of $X.$  The idea is to take a random sample $S$ of $X$'s entries and then to store a random linear combination of the values in $S$ as an entry of the sketch.  If $S$ intersects $\supp(X)$ in a single element then this will allow us to recover a value in $\supp(X)$.  If $\supp(X)$ has size $k$, then we would like $S$ to sample each value with probability roughly $1/k$ in order to have a good probability of isolating a single element.  Since $k$ is unknown to us initially, the idea is to perform the sampling procedure at each of $\log n$ sampling levels $1,\frac{1}{2}, \frac{1}{4}, \ldots$.

The challenge for us is to design samples $S$ that allow for fast sketching on rank-one tensors, e.g., tensors of the form $x\otimes y$.  As we later show, the step of computing a random linear combination of values in $S$ can be achieved simply by randomly flipping the signs along each mode.  The harder part is designing $S$ so that we can sum the resulting tensor over $S$ in roughly $O(n)$ time.  In other words we need $S$ to be such that we can quickly compute
$
\sum_{(i,j) \in S} x_i y_j.
$
If $S$ just samples each entry of $X$ i.i.d. with probability $p$, then it is not clear how to do better than computing the sum term-by-term, which would require $\Omega(pn^2)$ time in our example.

The next most natural thing to try is to take $S = S_1 \times S_2$ to be a random rectangle, since rectangles allow the above summation to be computed in $O(n)$ time.  To achieve sampling probability $p$ one could take a rectangle of dimension $\sqrt{p} d \times \sqrt{p} d,$  where the subsets of indices $S_1$ and $S_2$ corresponding to each dimension are chosen randomly.  Thus, construction succeeds in sampling each entry of $X$ with probability $p$; however this is not sufficient to isolate a single entry of $\supp(X)$ with good probability.  For example $\supp(X)$ could consist of a single row of $X$.  If we set $p=1/d$ then $S$ has a good probability of sampling some entry from $\supp(X)$. However, when this occurs, $S$ nearly always contains more than one such element.  Indeed, $S_2$ would typically contain around $\sqrt{d}$ elements, and in this situation it is not possible to sample a single entry from a row of $X.$  One could try to fix this by randomizing the sampling scale on each side of the rectangle for a fixed sampling probability. In the case of $\ell_0$ sampling this would at least result in worse logarithmic factors in the sketching dimension than what we achieve.  In the case of $\ell_1$ embeddings, it is unclear how to obtain a constant-factor embedding with this approach.  For instance, in the two-mode case, a given sampling probability $p$ can be realized by $O(\log d)$ choices of scales for the two rectangle dimensions.  If a given level of $X$ has size $1/p$ and is distributed uniformly throughout $X$, then all rectangles with sampling probability $p$ will pick out single elements of that level.  On the other hand, if the level is distributed along a single $1$-dimensional fiber, then it might be the case that only one of the rectangles is good for that level.  This suggests that the distortion would likely need to be $\Omega(\log d)$ on some inputs.  It does not seem clear how one could arrange arrange for an $O(1)$ distortion $\ell_1$ embedding that handles these situations simultaneously.

Surprisingly, at least for $2$ and $3$ mode tensors, it is possible to construct samples $S$ that have better sampling properties than rectangles, but which can still be summed over in nearly linear time.  For brevity, we refer to the sampling that we need as a $p$-sample, when it samples each index with probability (approximately) $p.$ The idea is to design $S$ in such a way that we can employ algorithms for fast convolution using the Fast Fourier Transform. For example, in the $2$-mode case, one can compute
\[
\sum_{i + j \in [0,T-1]} x_i y_j 
\]
for a constant $T$ (chosen to give the appropriate sampling probability) in $O(n\log n)$ time, by first calculating the convolution $x * y$ and then summing over the indices in $[0,T-1].$  This only gives a sum over a fixed set, but by randomly permuting the indices along each mode, and choosing $T$ appropriately, this allows us to achieve sampling probabilities down to $p = 1/n.$  Smaller sampling probabilities result in samples of size $O(n)$, and so the sum can just be calculated explicitly.
(As it turns out, in the $2$-mode case, the runtime can be improved to $O(n)$ by a simple optimization.)

The $3$-mode case is somewhat more complicated.  A similar convolution trick involving a convolution of three vectors allows us to construct $p$-samples with fast summation down to $1/n.$ And sampling probabilities below $1/n^2$ allow for direct computation.  What about sampling probabilities between $1/n^2$ and $1/n$? Here, we show how to compute sums of the form
\[
\sum_{i+j+k=0, j-i\in[0,T-1]} x_i y_j z_k
\]
in $O(n\log^2 T)$ time.  The rough approach is to show that we can reduce computing this quantity to $n/T$ instances of multiplication by the top-left corner ($\{i,j : i \geq j\}$) of a roughly $T\times T$ Toeplitz matrix.   Each top corner matrix can be decomposed into a sum of $T\log T$ Toeplitz matrices (up to some zero-padding) each of which has an $O(T\log T)$ multiplication time.

\paragraph{$\ell_1$ embeddings}  For $\ell_1$ embeddings our approach roughly follows the sketch introduced by \cite{verbin2012rademacher} for producing constant distortion $\ell_1$ embeddings.  The main idea is to think of a vector $x\in \R^n$ as decomposed into approximate level sets, with the size of entries in each level set decreasing exponentially.  For instance, when $\norm{x}{1}=1$, the level sets $L_i$ could be taken to contain entries in $[q^{-(i+1)}, q^{-i}]$ for some $q < 1.$  For each level set, our sketch has an associated level with sampling probability larger than $q$, which is designed to capture most of the mass of $L_i$. Since we would like our sampling level to capture the mass of $L_i$ with high probability, we need to oversample the entries of $L_i$ substantially.  The usual approach here is to choose a fairly large $q$, and then hash the sampled values into separate buckets so as to minimize cancellations.  This is typically done with CountSketch; however it is not clear how to efficiently compose a CountSketch with our fast sample constructions.  Therefore we take a slightly different approach.  Instead of hashing each sampling level into $T$ buckets, we simply take $T$ independent samples, each with sampling probability $q/T.$. The analysis is quite similar to the analysis using CountSketch, but we retain the the ability to quickly sketch rank-one tensors.  However, we do pay a price -- our sketching time scales linearly with the size of the sketch. In contrast, for vectors, CountSketch constructions allow for sketching in time that scales as $O(n\cdot({\text{number of sampling levels}}))$.  However these sketches are not efficient for rank-one tensors, since $n$ would be replaced by $n^2$ or $n^3.$

To avoid too much dilation, the rough idea is that each sampling level picks up roughly the right amount of mass from its corresponding $L_i$ with good probability.  Also, large levels have a small numbers of elements and so they are unlikely to be picked up by a smaller sampling level.  The main issue is in bounding the contribution of the $L_i$'s containing many small values.  To avoid, this \cite{verbin2012rademacher} introduced the idea of hashing each sampling level into buckets with random signs in order to induce cancellation among the contributions from such $L_i$'s.  We employ the same technique here.  However in order to make applying the signs efficient, our approach is to first apply random signs along the modes of the tensor, and then to sum over appropriate $p$-samples in order to compute a ``bucket".  For a constant number of modes, this turns out to be enough randomness to induce cancellation.

This outline for constructing $\ell_1$ embeddings has appeared in the literature several times.  Our main novelty lies in constructing samples that admit fast linear combinations for rank-one tensors.

Finally, one might wonder if our techniques are really necessary. We could hope to design fast sketches using Kronecker-structured-sketches of the form $S_1 \otimes S_2$, which are easy to apply to a rank-one tensor $x\otimes y$ -- one simply computes $S_1 x \otimes S_2 y.$  It does not appear that a Kronecker-structured sketch can match our bounds.  Indeed by a lower bound given in \cite{li2021exponentially} , an $O(1)$ $\ell_1$ embedding for a single vector $x\in\R^n$ requires at least $\poly(n)$ sketching dimension for Kronecker-structured sketches.  On the other hand, our sketch can still be applied to rank-one tensors in near-linear time, but requires only $\poly\log n$ space to embed $x.$   One could also ask whether Kahtri-Rao sketches (i.e., a sketch with each row of the sketch a rank one tensor) could be applied to match our sketching dimension.  While we do not provide a lower bound against Kahtri-Rao measurements, we also do not see how to match our bounds for either $\ell_0$ sampling or $\ell_1$ embeddings.  We leave the question of whether such a lower bound exists as an interesting open question.

For larger $q$ we also note that our $\ell_1$ embedding procedure could be applied to triples of modes at a time, giving a tree construction similar to \cite{ahle2020oblivious}.  Unfortunately however, the distortion grows like $2^{\log_3 q}$, and the success probability becomes $c^{-q}$ by a union bound.  Whether these parameters can be improved while allowing for fast sketching is an interesting problem.

\subsection{Additional Open Questions}  An interesting open question is whether a similar set of ideas can be applied to tensors with a larger number of modes. For four modes, the most natural approach here would be to attempt to develop a fast sum that can sum over a (subset of) a codimension $2$ subspaces of the tensor entries.  For example, we might wish to calculate sums of the form
\[
\sum_{i,j,k,\ell \in Q} w_i x_j y_k z_{\ell}
\]
where $Q$ is a set of entries satisfying two linear constraints. Unfortunately, we are not aware of a way to calculate such a sum faster than $O(n^2)$ time. It would be interesting to either give a fast summation algorithm, or to find a new technique that gets around this issue.

\subsection{Related Work}
A substantial literature has been devoted to obtaining fast sketches for tensors.   Tensor sketching was initiated in \cite{pham2013fast} where it was shown that CountSketch can be quickly applied to rank-one tensors using the Fast Fourier Transform. More recently, $\ell_2$ embeddings were constructed by \cite{ahle2019almost} and improved in \cite{ahle2020oblivious} with applications for sketching polynomial kernels.

In earlier work \cite{indyk2008declaring} gives a Kronecker-structured for the $\ell_1$ norm in the context of independence testing, using a variation on the previously-known Cauchy sketch \cite{indyk2006stable}. We note that this sketch requires taking a median in order to recover the $\ell_1$ norm, and thus does not give an $\ell_1$ embedding, which may be more suitable for optimization applications.

\cite{indyk2006stable} studies the problem of $\ell_1$ estimation using Cauchy sketches. 
 Later, \cite{verbin2012rademacher} gave a construction of an oblivious $\ell_1$ embedding.  Our general approach for constructing $\ell_1$ embeddings is largely based on theirs.  This approach is also generalized in \cite{clarkson2014sketching} to M-estimators, and in particular is applied to construct $\ell_1$ (and more general) subspace embeddings.  These bounds for $\ell_1$ embeddings were recently improved in \cite{munteanu2021oblivious}.

\cite{li2021exponentially} expanded on the  work and considered independence testing for higher order tensors.  They also give a $\poly(d)$ lower bound for constructing $\ell_1$ embeddings for a single vector in $\R^d$ using Kronecker-structured measurements.

\subsection{Notation and Preliminaries}

We use the notation $[N]$ to refer to the set $\{1,\ldots, N\}.$  For two vectors $x$ and $y$, the notation $x\otimes y$ refers to the Kronecker product of $x$ and $y.$  We use the notation $x * y$ for the circular convolution of $x$ and $y$. That is
\[
(x*y)_k = \sum_{i+j = k} x_i y_j,
\]
where the sum $i+j$ is interpreted mod $N.$  In such situations it is sometimes convenient to index vectors starting at $0$ so we will do this occasionally.  However unless otherwise stated, we index starting at $1.$

We typically identify a $q$-mode tensor with a vector in $\R^{n^q}$, assuming that the dimension along each tensor mode is $n.$ We will typically make this assumption for the sake of convenience.  We say that a tensor in $\R^{n^q}$ has rank one if it can be expressed as $x_1 \otimes \ldots \otimes x_q$ for some vectors $x_1, \ldots, x_q.$

We will use $c$ throughout to refer to an absolute constant, which might be different between uses (even within an equation).

The notation $\tilde{O}(f)$ means $O(f\log^c f)$ for some constant $c.$

\paragraph{$\ell_0$ sampling.}
$\ell_0$ sampling has received extensive attention.  \cite{cormode2014unifying} surveys the standard recipe for constructing $\ell_0$-samplers, which we roughly follow here.

\cite{woodruff2018distributed} considers solving the $\ell_0$-sampling problem in a distributed setting for a product of two matrices which are held on different servers, however this is different from our setting, as their communication scheme is not a linear sketch.


\section{$p$-sample constructions}

We first define our main sampling primitive, which we call a $p$-sample.  This can be viewed as a slightly weaker version of a pairwise independent sample.
\begin{definition}
Let $T$ be an arbitrary finite set. We say that a random subset of $S$ of $T$ is a $p$-sample for $T$ if
\begin{enumerate}
\item For all entries $i\in T$, 
\[
p/2 \leq \Pr(i \in S) \leq p.
\]
\item For all $j\neq i$ in $T$, 
\[
\Pr(j \in S | i \in S) \leq 2p.
\]
\end{enumerate}
\end{definition}

\begin{definition}
Fix an $n$ and $m$ and suppose that $S$ is a subset of $[n]^{m} = [n]\times \ldots \times [n]$.  Let $x^{(1)}, \ldots, x^{(m)} \in \R^n$ be arbitrary vectors.  We say that $S$ admits fast summation if there is an algorithm which computes $\sum_{(i_1, \ldots, i_m)\in S} x_{i_1}^{(1)}x_{i_2}^{(2)}\ldots x_{i_m}^{(m)}$ in $\tilde{O}(mn)$ time.
\end{definition}

To perform one of our constructions we will need a particular type of Toeplitz-like fast matrix multiplication.
\begin{lemma}
\label{lem:fast_toeplitz_corner}
Let $A\in \R^{n\times n}$ be a matrix where $A_{i, j} = A_{i+1, j+2}$ for all $0 \leq i \leq n-1$, $0\leq j \leq n-2.$  Let $B$ be defined by $B_{i,j} = A_{i,j}$ if $j \geq i$ and $B_{i,j} = 0$ otherwise.

Given $v\in \R^n$ the matrix product $A v$ can be computed in $O(n\log n)$ time, and the product $B v$ can be computed in $O(n\log^2 n)$ time. 
\end{lemma}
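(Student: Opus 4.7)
The plan is to observe that the hypothesis $A_{i,j}=A_{i+1,j+2}$ forces $A_{i,j}$ to depend only on the quantity $j-2i$, so there exists a sequence $(a_m)_{m=-(2n-2)}^{n-1}$ with $A_{i,j}=a_{j-2i}$. I will use this to reduce both products to FFT-based convolution, and handle $Bv$ by a divide-and-conquer on blocks.

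For the first claim (computing $Av$ in $O(n\log n)$ time), I would write
\[
(Av)_i \;=\; \sum_{j=0}^{n-1} a_{j-2i}\, v_j \;=\; \sum_{j} v_j\, \tilde a_{2i-j},
\]
where $\tilde a_k := a_{-k}$. Thus $(Av)_i$ is the value of the (linear) convolution $v * \tilde a$ at position $2i$. Computing the full convolution $v * \tilde a$ by FFT takes $O(n\log n)$ time, and the $n$ outputs are then read off at the even indices $0,2,\ldots,2(n-1)$. This gives the first bound.

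For the second claim, I would proceed by divide-and-conquer on $n$ (padding to a power of two if needed). Partition rows and columns of $B$ into the top/bottom halves of size $n/2$, giving four blocks $B^{TL}, B^{TR}, B^{BL}, B^{BR}$. I will check each:
\begin{itemize}
\item $B^{BL}$ has $i\geq n/2 > j$, so the constraint $j\geq i$ fails everywhere and this block is identically zero.
\item $B^{TR}$ has $i\leq n/2-1 < n/2 \leq j$, so $j\geq i$ holds everywhere and the block equals the full $A$-submatrix there, which after re-indexing $j'=j-n/2$ is of the same ``shift-by-$2$'' form $A^{TR}_{i,j'}=a_{(j'+n/2)-2i}$. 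By the first part, multiplying this $n/2 \times n/2$ block by a vector costs $O(n\log n)$.
\item $B^{TL}$ is exactly an instance of the original problem at size $n/2$.
\item $B^{BR}$, after re-indexing $i'=i-n/2,\ j'=j-n/2$, also has the form $a_{j'-2i' - n/2}$ restricted to $j'\geq i'$, i.e.\ another corner instance of size $n/2$ with sequence shifted.
\end{itemize}
Therefore the cost $T(n)$ of applying $B$ satisfies
\[
T(n) \;\leq\; 2\,T(n/2) \;+\; O(n\log n),
\]
which solves to $T(n)=O(n\log^2 n)$, as claimed.

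The only technical point that needs care is the block decomposition for $B^{BR}$: one must verify that the zero-pattern (upper triangle) and the ``shift-by-$2$'' relation are both preserved under the re-indexing, so that the recursion really reduces to two genuine sub-instances of the same problem. I do not expect a serious obstacle here; it is a direct substitution. The main conceptual step is recognizing that although $A$ is not Toeplitz in the usual sense, its constant-along-slope-$1/2$-diagonals structure is still enough for FFT-based fast multiplication after subsampling.
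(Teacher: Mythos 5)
Your proof is correct and follows essentially the same route as the paper: for $Av$ you read off the even-indexed entries of an FFT convolution, which is exactly the paper's observation that $A$ consists of the even rows of a $2n\times 2n$ Toeplitz matrix, and for $Bv$ your block partition (zero block, one full ``shift-by-$2$'' square handled by the first part, and two half-size corner sub-instances) is the paper's decomposition of $B$ into an inscribed square plus two triangles, giving the same recurrence $T(n)=2T(n/2)+O(n\log n)$ and the bound $O(n\log^2 n)$.
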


\begin{proof}
To see that $A$ admits fast multiplication, note that the rows of $A$ coincide with the even rows of a Toeplitz matrix of size $2n \times 2n.$ Since multiplication by a Toeplitz matrix can be carried out in $O(n\log n)$ time \cite{kailath1999fast}, the same holds for $A.$

To get a fast algorithm for $B$ we decompose it into a sum of matrices, each with the same structure of $A$ up to some zero-padding.  We call these matrices $B_1, B_2$ and $B_3.$  Take $B_1$ to be $B$ but with all indices outside of $\{(i,j) : 1\leq i \leq \lceil n/2 \rceil, \lceil n/2 \rceil \leq j \leq n\}$ replaced with $0.$ Visually, the support of $A$ is a right triangle and the support of $B_1$ corresponds to the largest square inscribed in $A.$  Let $B_2$ and $B_3$ correspond to the two triangles that are left after removing the square. That is, $B_2$ has support contained in $\{(i,j): i > \lfloor\frac{n}{2}\rfloor\}$ and $B_3$ has support contained in $\{(i,j): j > \lfloor\frac{n}{2}\rfloor\}.$

By construction, $B = B_1 + B_2 + B_3.$  Then $B_1$ has the structure of $A$ from the first part of the lemma, so admits $O(n\log n)$ time multiplication.  After removing the zero padding, both $B_2$ and $B_3$ have the same structure as $B$ but with half the dimension.  Thus we have a runtime recurrence of the form $T(n) = 2 T(\lfloor n/2\rfloor) + O(n\log n)$, which gives an overall runtime of $O(n\log^2 n).$

\end{proof}

\begin{theorem}
For $m = 2,3$, and for all $n$ and $p$, there exists a $p$-sample for $[n]^m$ that admits fast summation.  Additionally
\begin{enumerate}
    \item When $m=2$ the summation runs in $O(n)$ time
    \item When $m=3$ the summation runs in $O(n\log^2 n)$ time
\end{enumerate}
\end{theorem}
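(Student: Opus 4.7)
The plan is to handle each regime of $p$ with a different structured sample, unified by a common paradigm. We choose uniformly random permutations $\pi_1,\ldots,\pi_m$ of $[n]$ and define $S$ via a fixed deterministic constraint on $(\pi_1(i_1),\ldots,\pi_m(i_m))$. Since the permutations can be absorbed by relabeling $(x^{(k)})'_{\pi_k(j)} := x^{(k)}_j$, both the probabilistic analysis of the $p$-sample and the summation runtime reduce to properties of the underlying deterministic constraint set $S_0 \subseteq [n]^m$.

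For $m=2$ and $p \ge 1/n$, set $T = \lceil pn\rceil$ and take $S_0 = \{(a,b) : a+b \in [0,T-1] \pmod n\}$. Each index $(i_1,i_2)$ enters $S$ with probability exactly $T/n \in [p/2, p]$ by a direct count. The pairwise bound $\Pr(j\in S \mid i\in S) \le 2p$ follows by splitting into cases based on whether $j$ shares zero or one coordinate with $i$, and in each case counting the valid images under the appropriate conditional distribution on permutation values. For the summation, let $y'_{\pi_2(k)} := y_k$ and precompute the sliding window $f(a) := \sum_{\ell=0}^{T-1} y'_{a+\ell \bmod n}$ in $O(n)$ time via a running sum; then $\sum_{(i,j)\in S}x_i y_j = \sum_i x_i\, f(-\pi_1(i) \bmod n)$ takes another $O(n)$ operations. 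For $p < 1/n$ the expected sample size is below $n$, so we use independent Bernoulli sampling (which trivially satisfies both $p$-sample axioms) and enumerate the sampled entries.

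For $m=3$ we split into three regimes. If $p \ge 1/n$, take $T = \lceil pn \rceil$ and the constraint $a+b+c \in [0,T-1] \pmod n$; the sum is obtained by computing the circular convolution $x'*y'*z'$ by FFT in $O(n\log n)$ time and adding the first $T$ coefficients. If $1/n^2 \le p < 1/n$, take $T = \lceil pn^2 \rceil$ and the constraint $\{a+b+c = 0\} \cap \{a-b \in [0,T-1]\}$, giving sample size $\asymp Tn$ and per-index probability $\asymp p$. For the summation, for each $c$ we must evaluate $\sum_{a+b=-c,\ a-b \in [0,T-1]} x'_a y'_b$; the substitution $a = (-c+d)/2$, $b = (-c-d)/2$ (working in $\mathbb{Z}/n$ with $n$ odd after mild padding) turns this into a windowed convolution of length $T$. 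Grouping the $n$ values of $c$ into $n/T$ blocks of $T$ consecutive values converts each block into a single multiplication by the top-left-triangular corner of a $T \times T$ Toeplitz-type matrix, which by Lemma~\ref{lem:fast_toeplitz_corner} costs $O(T \log^2 T)$; the overall runtime is $O(n \log^2 n)$. If $p < 1/n^2$, independent Bernoulli sampling once again suffices.

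The main obstacle is verifying the conditional-probability axiom in the middle $m=3$ regime, where the constraint $\{a+b+c=0\}\cap\{a-b\in[0,T-1]\}$ has substantially less symmetry than the pure ``linear sum'' constraints; one must separately control pairs $(i,j)$ agreeing on zero, one, or two coordinates, and in each case check that the conditional number of valid images of $j$ under the permutations remains $O(T)$. A secondary subtlety is the reduction of the middle-regime sum to the structured Toeplitz-corner multiplications of Lemma~\ref{lem:fast_toeplitz_corner}, which requires $n$ odd so that the change of variables $d = a - b$ is invertible; this is handled by padding $n$ by at most one and absorbing the constant-factor loss. The remaining steps are routine bookkeeping.
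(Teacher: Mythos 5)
Your proposal is correct and follows essentially the same route as the paper: the interval constraint $a+b\in[0,T-1]$ with a running-sum window for $m=2$, the circular-convolution constraint $a+b+c\in[0,T-1]$ plus the sparser $\{a+b+c=0\}\cap\{a-b\in[0,T-1]\}$ constraint reduced to Toeplitz-corner multiplications for $m=3$, and direct enumeration for very small $p$. The only cosmetic differences are that the paper uses arbitrary random functions rather than permutations for the $A_T$ and $B_T$ constructions (which makes the pairwise bound an exact independence statement) and handles the parity issue in the middle regime by splitting the outer sum over even and odd indices rather than by padding.
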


\begin{proof}

We first show how to construct sets of indices that can be summed over quickly.  We will then show how to use these sets to construct $p$-samples with fast summation.

\textbf{m=2 Constructions.} Let $T$ be an arbitrary positive integer, and consider the set
\[
A_T = \{(i,j) \in [n]\times [n] : i + j \in [T]\} \subseteq (\mathbb{Z}/n)^2.
\]

(Note that we are treating all indices as values in $\mathbb{Z}/n.$) We will first show that $A_T$ admits $O(n)$ summation for all $T.$  We would like to compute a sum of the form
\[
\sum_{i+j \in [T]} x_i y_j = \sum_i \sum_{j:j-i \in T} x_i y_j
=\sum_i x_i \sum_{j \in [1+i, T+i]} y_j.
\]
Let $S(i)$ denote the value of the inner sum for a fixed $i$. Note that $S(i+1) = S(i) + y_{T+j+1} - y_{i+1}.$  Then $S(0)$ can be computed in $O(n)$ time and each of $S(1), S(2),\ldots, S(n-1)$ can be computed in turn, each in $O(1)$ time using the recurrence.  This gives an $O(n)$ algorithm for computing the original sum.

\textbf{m=3 Constructions.}  Let $T$ be arbitrary and set $B_T = \{(i,j,k) : i+j+k\in [0,T-1]\}$.consider a sum of the form
\[
\sum_{(i,j,k)\in B_T} x_i y_j z_k = \sum_{t\in [0,T-1]}\sum_{i+j+k = t} x_i y_j z_k.
\]
Each of the inner sums occurs as an entry in the circular convolution $x * y * z$.  A circular convolution can be computed in $O(n\log n)$ time using the Fast Fourier Transform, and so $B_T$ can be computed in $O(n\log n)$ time.

For $m=3$, we also need a sparser construction.  For this, define $C_T = \{(i,j,k): i+j+k = 0, j-i \in [0,T-1]\}.$ We are interested in the sum
\begin{align*}
\sum_{(i,j,k)\in C_T} x_iy_jz_k 
&=  \sum_{i+j+k = 0, j-i \in [T]} x_i y_j z_k\\
&= \sum_{k}z_{-k}\sum_{i+j=k, j-i\in [T]} x_i y_j.
\end{align*}
As the algorithm for fast multiplication is slightly more involved we defer to Lemma~\ref{lem:fast_CT_summation}.

\paragraph{Constructing $p$-samples.}  In each case we will begin by applying a random function $P_i : [n]\rightarrow [n]$ independently along each mode $i$.  Then we will take the indices that land in either $A_T$, $B_T$, or $C_T.$

For $A_T$ we take the set of indices 
\[
A_T' = \{(i,j): (P_1(i), P_2(j)) \in A_T \}.
\]

The probability that $(i,j) \in A_T$ is $T/n$ since $(i,j)$ is uniformly random.  For $(i,j)\neq (k,\ell)$ the values of $P_1(i) + P_2(j)$ and $P_1(k) + P_2(\ell)$ are independent. To see this, suppose without loss of generality that $j \neq \ell.$ Then conditioned on $P_1(i),P_2(j)$ and $P_1(k)$, we have that $P_2(\ell)$ is uniform over $[n].$  Therefore the events $(P_1(i), P_2(j)) \in A_T$ and $(P_1(k), P_2(\ell)) \in A_T$ are independent, and so 
\[
\Pr((k,\ell)\in A_T' | (i,j)\in A_T') = T/n.
\]

For $B_T$, precisely the same construction and argument applies.

$C_T$ requires a bit more work.  This time we choose $P_1, P_2, P_3$ to be random permutations.  Define $P$ by $P((i,j,k)) = (P_1(i), P_2(j), P_3(k)).$ As before, for any $(i,j,k)$, $(P_1(i),P_2(j), P_3(k))$ is uniformly random, so $(P_1(i),P_2(j), P_3(k)) \in C_T$ with $T/n^2$ probability.  Now consider two pairs $u_1 = (i_1, j_1, k_1)$ and $u_2 = (i_2, j_2, k_2)$ with $u_1 \neq u_2.$  Suppose that $u_1$ and $u_2$ differ in only a single coordinate.  Then since $P_1, P_2, P_3$ are permutations, 
\[
\Pr( P(u_1) \in C_T | P(U_2) \in C_T ) = 0,
\] 
since $C_T$ intersects each single-mode fiber in at most one coordinate.

Next we consider the case where $u_1$ and $u_2$ differ in precisely two coordinates.  We start the case where $u_1$ and $u_2$ agree in the third coordinate.  Without loss of generality, we assume that $u_1=(0,0,0)$ and $u_2 = (1,1,0).$ Then $\Pr(P(u_1) \in C_T\aand P(u_2)\in C_T)$ is the probability that the following events occur:
\begin{enumerate}
\item $P_1(0) + P_2(0) = -P_3(0)$
\item $P_2(0) - P_1(0) \in [0,T-1]$

\item $P_1(1) + P_2(1) = -P_3(0)$
\item $P_2(1) - P_1(1) \in [0,T-1]$
\end{enumerate}
For any fixed value of $P_3(0)$, there are exactly $T$ pairs $(P_1(0), P_2(0))$ that satisfy the first two equations, and similarly there are $T$ pairs $(P_1(1), P_2(1))$ that satisfy the last two equations. There are $T(T-1)$ ways to choose the two pairs, so the probability that $P_1$ and $P_2$ give two such pairs is $T(T-1)(\frac{1}{n}\frac{1}{n-1}\frac{1}{n}\frac{1}{n-1}).$ This implies that 
\[
\Pr(u_2 \in C_T | u_1 \in C_T) = \frac{T-1}{(n-1)^2}.
\]

A similar calculation gives the same probability when $u_1$ and $u_2$ agree on precisely the first or second coordinates.

Finally consider the case where $u_1$ and $u_2$ differ on all three coordinates.  Then $P_1(u_1)$ is uniform over all possible triples, and conditioned on $P_1(u_1)$, $P_2(u_2)$ is uniform over all triples that differ from $P_1(u_1)$ in all coordinates. There are $(n-1)^3$ such triples and at most $Tn$ of these are in $C_T.$  So 
\[
\Pr(u_2 \in C_T | u_1 \in C_T) \leq \frac{Tn}{(n-1)^3}.
\]

For two modes, we have constructed $p$-samples for $p=n/T$. This is sufficient to give a $p$-sample for all sampling probabilities down to $1/n.$  For sampling probabilities smaller than $1/n$, the size of the sample will be $O(n)$ with high probability, and so it admits fast summation by direct computation.

For three modes, our two constructions give $p$-samples down to sampling probability $1/n^2.$  For smaller $p$, we can again compute the desired sum explicitly in $O(n)$ time.
\end{proof}

We now verify that $C_T$ indeed admits fast summation.
\begin{lemma}
\label{lem:fast_CT_summation}
Let $T$ be a positive integer, and let $x,y,z\in \R^n$.  Let 
\[C_T = \{(i,j,k) : i + j + k = 0\,\, \text{and}\,\, i - j \in \{0,\ldots, T-1\},\]
where the arithmetic operations are treated mod $T$.  There is an algorithm that computes
\[
\sum_{(i,j,k)\in C_T} x_i y_j z_k
\]
in $O(n\log^2 T)$ time. For convenience, we zero-index into each vector. 
\end{lemma}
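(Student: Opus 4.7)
My plan is to follow the reduction already sketched before the lemma statement: write $\sum_{(i,j,k)\in C_T} x_i y_j z_k = \sum_k z_{-k}\, w_k$ with $w_k = \sum_i x_i y_{k-i}\,\mathbf{1}[2i-k\in[0,T-1]]$, so that the final inner product with $z$ costs $O(n)$ and the real task is computing all $w_k$ in $O(n\log^2 T)$ time. Splitting by parity of $k=2\ell$ or $k=2\ell+1$ gives $w_{2\ell}=\sum_{i=\ell}^{\ell+\lfloor(T-1)/2\rfloor} x_i\, y_{2\ell-i}$ and an analogous expression for odd $k$. Each case takes the form of a matrix-vector product $w = M x$, where $M_{\ell,i} = y_{2\ell - i}$ on the parallelogram band $\ell \le i \le \ell + S$ with $S=\lfloor(T-1)/2\rfloor$ and zero elsewhere. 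A direct check shows $M_{\ell,i} = M_{\ell+1,i+2}$, exactly the structural relation in Lemma~\ref{lem:fast_toeplitz_corner}.

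Next I would chunk the row index $\ell\in[0,n-1]$ into $O(n/T)$ contiguous blocks of length $T$, so that each block depends on only $O(T)$ consecutive entries of $x$ and $y$. Blocks whose $i$-range wraps around $\mathbb Z/n$ are split into two contiguous pieces, contributing at most $O(1)$ extra work overall. After a block-local re-index $\tilde\ell\in[0,T-1]$, $\tilde i\in[0,T+S-1]$, the block-local matrix has entries $y_{c + 2\tilde\ell - \tilde i}$ for a block-dependent constant $c$, supported on $\tilde\ell \le \tilde i \le \tilde\ell + S$, and inherits the $(1,2)$-shift invariance $\tilde M_{\tilde\ell,\tilde i} = \tilde M_{\tilde\ell+1,\tilde i+2}$.

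The key step is to express the parallelogram-supported block matrix as a difference of two upper-triangular ``corner'' matrices, so that Lemma~\ref{lem:fast_toeplitz_corner} applies to each piece. Use the identity $\mathbf{1}[\tilde\ell\le\tilde i\le\tilde\ell+S] = \mathbf{1}[\tilde i\ge\tilde\ell] - \mathbf{1}[\tilde i\ge\tilde\ell+S+1]$. The first term is precisely the upper triangle of a Toeplitz-corner matrix, so Lemma~\ref{lem:fast_toeplitz_corner} immediately yields an $O(T\log^2 T)$ multiplication. For the second, substitute $\tilde i' = \tilde i - (S+1)$: the support becomes the standard upper triangle $\tilde i'\ge\tilde\ell$, and the shift invariance is preserved, since $2(\tilde\ell+1) - (\tilde i'+2) = 2\tilde\ell - \tilde i'$ shows that the re-indexed entries depend only on $2\tilde\ell - \tilde i'$ modulo the fixed constant $c - (S+1)$. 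Lemma~\ref{lem:fast_toeplitz_corner} thus applies again, giving another $O(T\log^2 T)$ multiplication per block.

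Summing $O(n/T)$ blocks at $O(T\log^2 T)$ each yields $O(n\log^2 T)$ for the even-$k$ pass; the odd-$k$ pass is symmetric, and the outer sum against $z$ adds $O(n)$. The step I expect to require the most care is the triangle decomposition in the previous paragraph: I must verify that the column re-indexing used for the shifted triangle genuinely preserves the hypothesis $A_{i,j}=A_{i+1,j+2}$ of Lemma~\ref{lem:fast_toeplitz_corner}. The remaining work is mostly bookkeeping around parity, floor/ceiling rounding in the index ranges, and the modular wrap-around at block boundaries, none of which affect the asymptotic runtime.
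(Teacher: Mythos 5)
Your proposal matches the paper's proof in all essentials: the same reduction to $\sum_k z_{-k}\sum_i x_i y_{k-i}$ over a width-$O(T)$ diagonal stripe of a stride-$2$ circulant matrix, the same parity split in $k$, and the same appeal to Lemma~\ref{lem:fast_toeplitz_corner} applied to $O(n/T)$ corner matrices of size $O(T)$. The only differences are cosmetic: you realize each block's parallelogram band as a \emph{difference} of two shifted upper-triangular corners where the paper partitions the stripe directly into $2n/T$ right triangles of alternating orientation, and you should note that $\{i : 2i \in [k,k+T-1]\}$ is a union of \emph{two} intervals in $\mathbb{Z}/n$ (the paper's $I_k$ and $J_k$), so there are four sums rather than two --- but the second interval is handled identically and does not affect the runtime.
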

\begin{proof}
We first rewrite the sum as
\begin{align*}
\sum_{(i,j,k)\in C_T} x_i y_j z_k
&= \sum_{k} z_{-k} \sum_{(i,j): i+j = k, i-j\in \{0,\ldots, T-1\}} x_i y_j\\
&= \sum_{k} z_{-k} \sum_{i:2i \in \{k,\ldots, k+T-1\}} x_i y_{k-i}.
\end{align*}
Note that $\{i \in \mathbb{Z}/T : 2i \in [k, k+T-1]\}$ is a union of two intervals $I_k$ and $J_k$ in $\mathbb{Z}/T$ which are given by 
\begin{align*}
I_k &= \left[\left\lceil \frac{k}{2} \right\rceil, \left\lfloor \frac{k+T-1}{2} \right\rfloor\right],\,\,\\
J_k &= \left[\left\lceil \frac{k+T}{2} \right\rceil, \left\lfloor \frac{k+2T-1}{2} \right\rfloor\right].
\end{align*}
We now split the inner sum over $I_k$ and $J_k.$  We also split the outer sum over even and odd values of $k$ so that the intervals shift by one with each term.  This gives four sums to compute, each of which is similar.  For the first sum, we wish to compute
\begin{align*}
\sum_{k': 0\leq 2k' < T} z_{-2k'} &\sum_{i \in I_{2k'}} x_i y_{k-i}\\
&= \sum_{k': 0\leq 2k' < T} z_{-2k'} \sum_{i \in k' + I_{0}} x_i y_{2k'-i}.
\end{align*}
We will show how to compute all of the inner sums quickly.  Define the ``stride-$2$" circulant matrix $Y\in \R^{n\times n}$ by $Y_{kj} = y_{2k - j}.$  Define $Y'_{kj}$ by $Y'_{kj} = Y_{kj}$ for $j \in k + I_0$ and $0$ otherwise.  The sums we wish to evaluate are precisely the entries of $Y'x$ by construction. 

Now we show how to evaluate $Y' x.$ By increasing the dimension of $Y'$ and $x$ by at most $T$ (and continuing the circulant pattern), we may assume without loss of generality that $n$ is a multiple of $|I_0|.$

The support of $Y'$ contains a ``stripe" of width $\abs{I_0}$ running parallel to the diagonal.  This stripe decomposes into a union of $2n/T$ right triangles with legs of length $|I_0|,$ and with $n/T$ in each of two orientations.  More formally, by permuting rows, we may assume that $I_0 = [0, a)$ where $a=|I_0|.$  Then the triangles described are the restrictions of $Y'$ to sets of the form $[ra, (r+1)a] \times [sa, (s+1)a].$ Each such triangular submatrix is of the form considered in Lemma~\ref{lem:fast_toeplitz_corner}, so admits $O(|I_0|\log^2 |I_0|) = O(T\log^2 T)$ matrix-vector multiplication.  Since there are $2n/T$ such submatrices of $Y'$, matrix vector multiplication with $Y'$ runs in $O(n\log^2 T)$ time.  This allows the sum in question to be computed in the same time.

Each of the other three sums can be evaluated similarly and gives the same runtime.  So the runtime stated in the lemma follows.
\end{proof}

We also collect a couple of basic facts that follow from the definition of a $p$-sample.
\begin{proposition}
\label{prop:basic_psample_facts}
Let $S$ be a $p$-sample for $x$, and let $L$ be a subset of indices.  
\begin{enumerate}
    \item For all $i$ in $L$, we have $S\cap L = \{i\}$ with probability at least $\frac{1}{2}p(1-2p|L|).$

     \item With probability at least $\frac{1}{2}|L|p(1 - 2p|L|)$, $|S\cap L| = 1.$


\end{enumerate}
\end{proposition}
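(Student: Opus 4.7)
The plan is to prove the two parts directly from the two defining properties of a $p$-sample, using a union-bound style argument combined with the conditional inclusion probability. Both parts follow essentially from the same calculation, with part 2 being an immediate consequence of part 1.

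For part 1, I would fix $i \in L$ and write
\[
\Pr(S \cap L = \{i\}) = \Pr(i \in S) - \Pr\bigl(i \in S \text{ and } \exists j \in L\setminus\{i\} \text{ with } j \in S\bigr).
\]
The subtracted term can be upper bounded by a union bound as $\sum_{j\in L\setminus\{i\}} \Pr(i \in S, j \in S)$, and each summand factors as $\Pr(i\in S)\Pr(j\in S\mid i\in S) \leq 2p\,\Pr(i\in S)$ by the second $p$-sample property. Pulling $\Pr(i\in S)$ out gives
\[
\Pr(S \cap L = \{i\}) \geq \Pr(i\in S)\bigl(1 - 2p(|L|-1)\bigr) \geq \tfrac{p}{2}\bigl(1 - 2p|L|\bigr),
\]
using the lower bound $\Pr(i \in S) \geq p/2$ from the first property.

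For part 2, I would observe that the events $\{S\cap L = \{i\}\}$ for distinct $i\in L$ are pairwise disjoint, so
\[
\Pr(|S\cap L| = 1) = \sum_{i\in L} \Pr(S\cap L = \{i\}) \geq |L|\cdot \tfrac{p}{2}\bigl(1 - 2p|L|\bigr),
\]
which is the stated bound.

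There is no real obstacle here; both bounds are a clean two-line consequence of the $p$-sample definition. The only minor care required is to keep track of the factor of $2$ coming from the lower bound $\Pr(i\in S)\geq p/2$ versus the upper bound $\Pr(j\in S\mid i\in S)\leq 2p$, so that the final constants line up exactly with those in the statement.
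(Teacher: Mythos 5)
Your proposal is correct and follows essentially the same argument as the paper's proof: a union bound over $j \in L\setminus\{i\}$ using the conditional bound $\Pr(j\in S\mid i\in S)\leq 2p$ for part 1, and mutual exclusivity of the events $\{S\cap L=\{i\}\}$ for part 2. The only cosmetic difference is that you subtract the joint probability explicitly rather than conditioning on $i\in S$ directly, which even gives the marginally sharper factor $1-2p(|L|-1)$ before rounding down to the stated bound.
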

\begin{proof}
Consider a fixed index $i$ in $L.$  With probability at least $p/2$ we have $i\in S.$  Conditioned on $i\in S$, we have $\Pr(j \in S) \leq 2p$ for all $j\neq i.$  Taking a union bound over all $j\in L$ gives $\Pr(S\cap L = \{i\}) \geq p/2 (1 - 2p|L|).$  The events $S\cap L = \{i\}$ are mutually exclusive as $i$ ranges over $L$, so the proposition follows.
\end{proof}

\section{Constructing an $\ell_0$ sampler.}

Before giving the $\ell_0$-sampler we give two basic constructions for $1$-sparse recovery and singleton testing using Kahtri-Rao structured measurements. We want Kahtri-Rao measurements here, so that we can apply our random sign flips modewise.
\begin{lemma}
\label{lem:singleton_tester}
Let $X\in \R^{n^q}.$  Suppose that $\sigma^{(i)}_j$ for $i \in [N] $ and $j\in [q]$ are random sign vectors. For $N \geq c\log n \log\frac{1}{\delta}$, the measurements $\inner{\sigma^{(i)}_1 \otimes \ldots \otimes \sigma^{(i)}_q}{X}$ are sufficient to
\begin{enumerate}
\item recover $X$ with $1-\delta$ probability if $X$ is $1$-sparse\footnote{$X$ being $1$-sparse means that $X$ has at most one nonzero entry}.
\item to distinguish between $X$ being $1$-sparse and $X$ having at least two nonzero entries with at least $1-\delta$ probability.
\end{enumerate}
\end{lemma}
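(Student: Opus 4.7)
Introduce the shorthand $h_k(\mathbf{p}) := \prod_{j=1}^q \sigma^{(k)}_j(p_j) \in \{\pm 1\}$, so each measurement reads $y_k = \sum_{\mathbf{i}} X_{\mathbf{i}} h_k(\mathbf{i})$, and for any candidate $\mathbf{p}$ form the statistic $W_k(\mathbf{p}) := y_k\, h_k(\mathbf{p})$. Call $\mathbf{p}$ a \emph{passing} candidate if $(W_k(\mathbf{p}))_{k=1}^N$ is constant in $k$. For Part~1 the algorithm (after detecting the trivial case $y \equiv 0$) scans every $\mathbf{p} \in [n]^q$ and outputs the unique passing candidate with value equal to the common $W$-value. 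If $X$ is $1$-sparse at $\mathbf{i}^*$ with nonzero value $v$, then $W_k(\mathbf{i}^*) = v\, h_k(\mathbf{i}^*)^2 = v$ for every $k$, so $\mathbf{i}^*$ passes. For any $\mathbf{p} \neq \mathbf{i}^*$ some mode distinguishes the two multi-indices and contributes a pair of truly independent signs, so $h_k(\mathbf{p}) h_k(\mathbf{i}^*)$ is a uniform $\pm 1$ value independent across $k$; hence $\Pr(\mathbf{p}\text{ passes}) \leq 2 \cdot 2^{-N}$, and a union bound over the $n^q$ candidates gives failure probability at most $2 n^q 2^{-N}$, which is $\leq \delta$ whenever $N \geq c(q\log n + \log(1/\delta))$, subsumed by the stated $N \geq c \log n \log(1/\delta)$.

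For Part~2 the algorithm uses the same test and outputs ``$1$-sparse'' if and only if exactly one candidate passes (with $y \equiv 0$ still treated separately). The $1$-sparse branch follows from the Part~1 analysis, so what remains is to show that for every $X$ with $\|X\|_0 \geq 2$, no candidate passes with probability at least $1 - \delta$. Since the sign vectors are independent across measurements, the variables $W_1(\mathbf{p}), \ldots, W_N(\mathbf{p})$ are i.i.d.\ for each fixed $\mathbf{p}$, and
\[
\Pr(\mathbf{p}\text{ passes}) \;=\; \textstyle\sum_c \Pr(W(\mathbf{p}) = c)^N \;\leq\; \gamma_{\mathbf{p}}^{\,N-1}, \qquad \gamma_{\mathbf{p}} := \max_c \Pr(W(\mathbf{p}) = c).
\]
So the task reduces to an anti-concentration bound $\gamma_{\mathbf{p}} \leq 1 - c_q$ for some constant $c_q > 0$ depending only on $q$, after which a second union bound over the $n^q$ candidates finishes the argument.

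To prove the anti-concentration, I would isolate one free sign at a time and recurse on the mode count. Pick any $\mathbf{j}_0 \in \supp(X) \setminus \{\mathbf{p}\}$ (nonempty since $\|X\|_0 \geq 2$) and a mode $\ell^*$ with $j_{0,\ell^*} \neq p_{\ell^*}$. The sign $\sigma^{(1)}_{\ell^*}(j_{0,\ell^*})$ appears in $h_1(\mathbf{j})$ exactly when $j_{\ell^*} = j_{0,\ell^*}$ and, crucially, does \emph{not} appear in $h_1(\mathbf{p})$ by our choice of $\ell^*$. Collecting like terms gives $W_1(\mathbf{p}) = A + \sigma^{(1)}_{\ell^*}(j_{0,\ell^*})\, B$ with $A, B$ functions of the remaining signs, and conditioning on $(A, B)$ yields $\gamma_{\mathbf{p}} \leq \tfrac{1}{2}(1 + \Pr(B = 0))$. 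Up to the scalar $h_1(\mathbf{p}) = \pm 1$, the variable $B$ has the same sign-sum structure as $y_1$ but on the $(q-1)$-mode sub-tensor obtained by slicing $X$ along $\{j_{\ell^*} = j_{0,\ell^*}\}$, which still contains the nonzero entry $\mathbf{j}_0$, so the recursion is well-founded. Unrolling down to the base case $q = 1$ (a linear combination of independent signs with at least one nonzero coefficient, which vanishes with probability at most $\tfrac{1}{2}$) yields $\gamma_{\mathbf{p}} \leq 1 - 2^{-q}$, an absolute constant in the regime $q \in \{2, 3\}$ that this paper targets.

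The delicate point is getting this anti-concentration \emph{uniformly} over all non-$1$-sparse $X$, without the bound degrading in either the support size or the entry magnitudes. A direct fourth-moment / Paley-Zygmund approach is awkward because the ``signs'' $h_1(\mathbf{j})$ are only pairwise independent and the tensor structure introduces extra correlated fourth-order terms coming from ``rectangular'' pairings across modes; isolating a single free sign at a time and peeling off one mode per recursion level is what sidesteps this, reducing the claim at each step to a clean one-variable statement.
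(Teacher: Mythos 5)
Your Part 1 argument is essentially the paper's: the paper also recovers the index from its sign pattern, uses pairwise independence of the products $h_k(\mathbf{i}^*)h_k(\mathbf{p})$ to get a $2\cdot 2^{-N}$ collision probability per wrong index, and union-bounds over $n^q$ candidates. For Part 2, however, you take a genuinely different route. The paper is two-stage: it first runs the recovery to produce a $1$-sparse candidate $Y$, then draws \emph{fresh} sign measurements and tests whether $X-Y$ is zero, showing by peeling off one mode at a time that each fresh measurement of a nonzero tensor is nonzero with probability at least $2^{-q}$, so $O(2^q\log\frac{1}{\delta})$ extra measurements suffice. You instead make the tester a single-stage, non-adaptive function of the original measurements (``accept iff exactly one candidate passes'') and prove that when $\|X\|_0\geq 2$ every candidate fails, via the anti-concentration bound $\max_c\Pr(W(\mathbf{p})=c)\leq 1-2^{-q}$; your mode-peeling induction for that bound is the same core idea as the paper's fiber argument, just applied to $X$ directly rather than to $X-Y$. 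The trade-offs: your version needs no second batch of measurements and no adaptivity, but pays a union bound over $n^q$ candidates, so the test costs $\Theta(2^q(q\log n+\log\frac{1}{\delta}))$ measurements rather than the paper's $O(2^q\log\frac{1}{\delta})$ — this matters downstream, where the paper reuses one fixed recovery sketch (Remark 2.7) and only pays $O(\log n\log(k\log n/\delta))$ per singleton test across $O(k\log n)$ trials, though both versions fit within the lemma's stated $N\geq c\log n\log\frac{1}{\delta}$. Both proofs implicitly assume the coordinates within each sign vector are fully independent (needed for the isolate-one-sign conditioning), which is consistent with the lemma's hypothesis of ``random sign vectors.'' I see no gap in your argument.
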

\begin{proof}
First note that if $X$ is $1$-sparse then all measurements will be equal in absolute value. 

Now, suppose that $X$ is $1$-sparse with $k$ in $\supp(X)$.  Say that the ``sign pattern" for $k$ is the sequence of signs of $\inner{\sigma^{(i)}_1 \otimes \ldots \otimes \sigma^{(i)}_q}{X}$ for $i\in [N].$  We can recover $k$ as long as its sign pattern is unique among indices, up to inverting all the signs (since an entry of $X$ could be either positive or negative).  The entries of each measurement are pairwise independent, so the probability that $k_1$ and $k_2$ have the same sign on a given measurement is $1/2.$ The probability that they have the same sign on each of $N$ measurements is $1/2^N$, so the probability that some other index has the same sign pattern as $k$, up to inverting signs, is at most $2 * n^q/2^N$ which is at most $\delta$ for $N \geq \log\frac{2 n^q}{\delta}.$  So if $X$ is $1$-sparse, then we recover $X$ with $1-\delta$ probability. 


To test if $X$ is $1$-sparse we first run the $1$-sparse recovery algorithm to obtain a $1$-sparse candidate $Y$ for $X$ (or if the recovery fails, then we immediately return False).  Now we would like to test if $X-Y$ is $0.$  To do this, we simply take additional sign measurements in the form of those in the lemma, and test if they all give $0.$ Clearly if $X-Y=0$ then this will occur.  Now consider the case where $X - Y \neq 0.$  Consider the natural compression $\R^{\otimes q} \rightarrow \R^{\otimes (q-1)}$ obtained by applying $\sigma_q$ along the $q$-th mode.  If $X-Y$ is non-zero then it has a 1-dimensional fiber $v$ along mode $q$ which is non-zero.  It is clear that $\inner{v}{\sigma_q}$ is non-zero with probability at least $1/2.$ Therefor applying this inductively shows that each measurement $\inner{\sigma^{(i)}_1 \otimes \ldots \otimes \sigma^{(i)}_q}{X-Y}$ is non-zero with probability at least $1/2^q.$  So $O(2^q \log\frac{1}{\delta})$ additional measurements suffice.
\end{proof}
\begin{remark}
\label{rmk:intialize_once}
The above algorithm shows that there exists a collection of sign measurements that allows for $1$-sparse recovery with probability $1.$ Thus such a collection of signs only needs to be sampled once, and it will be valid with probability $1-\delta.$  For the equality testing step, the randomness is essential.  However, conditioned on having a no-failure $1$-sparse recovery algorithm, we note that it yields no false negatives.  That is, if $X$ is $1$-sparse, then the tester will always accept.
\end{remark}

Our next lemma for $\ell_0$ sampling shows that if we succeed in isolating a single element of the support of $X$ at an $O(1/|\supp(X)|)$ sampling level, then that element is close to uniform from the support.
\begin{lemma}
\label{lem:near_uniform_lemma}
Let $S$ be a $p$-sample, where $p \leq \frac{1}{4|\supp(X)|}$.  Then for all $i \in \supp(X),$
\[
\Pr(i \in S \big| |S\cap \supp(X)| = 1) \in \left[\frac{0.25}{|\supp(x)|}, \frac{4}{|\supp(X)|} \right].
\]
\end{lemma}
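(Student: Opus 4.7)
The plan is a direct computation using the definition of conditional probability, bounding numerator and denominator separately via Proposition~\ref{prop:basic_psample_facts} together with the defining properties of a $p$-sample. Let $k = |\supp(X)|$ and fix $i \in \supp(X)$. Since $i \in \supp(X)$, the event $\{i \in S\} \cap \{|S \cap \supp(X)| = 1\}$ is exactly the event $\{S \cap \supp(X) = \{i\}\}$, so I would expand
\[
\Pr(i \in S \mid |S \cap \supp(X)| = 1) = \frac{\Pr(S \cap \supp(X) = \{i\})}{\Pr(|S \cap \supp(X)| = 1)}.
\]

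For the numerator, the first item of Proposition~\ref{prop:basic_psample_facts}, applied with $L = \supp(X)$, gives the lower bound $\Pr(S \cap \supp(X) = \{i\}) \geq \tfrac{1}{2} p (1 - 2pk)$. For the upper bound I would simply use $\Pr(S \cap \supp(X) = \{i\}) \leq \Pr(i \in S) \leq p$, which is immediate from property (1) of a $p$-sample. For the denominator, the second item of Proposition~\ref{prop:basic_psample_facts} supplies the lower bound $\Pr(|S \cap \supp(X)| = 1) \geq \tfrac{1}{2} k p (1 - 2pk)$, while a union bound over $i \in \supp(X)$ using $\Pr(i \in S) \leq p$ gives the upper bound $\Pr(|S \cap \supp(X)| = 1) \leq kp$.

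The hypothesis $p \leq 1/(4k)$ is used exactly once, to control the factor $1 - 2pk \geq 1/2$, which is what makes the lower bounds line up with the ratios $0.25/k$ and $4/k$ claimed. Plugging in, the lower bound on the ratio is
\[
\frac{\tfrac{1}{2}p(1 - 2pk)}{kp} \geq \frac{\tfrac{1}{2}p \cdot \tfrac{1}{2}}{kp} = \frac{1}{4k},
\]
and the upper bound is
\[
\frac{p}{\tfrac{1}{2}kp(1-2pk)} \leq \frac{p}{\tfrac{1}{2}kp \cdot \tfrac{1}{2}} = \frac{4}{k},
\]
which match the two bracket endpoints in the statement.

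There is no real obstacle here: the lemma is essentially bookkeeping once Proposition~\ref{prop:basic_psample_facts} is in hand. The only subtlety to keep in mind while writing it up is to cleanly distinguish which bound uses only property (1) of a $p$-sample (the trivial union/marginal bounds) and which uses both properties (1) and (2) together with the small-$p$ assumption (the more refined Proposition inequalities), since each one is needed in the right direction to make the numerator and denominator estimates compatible.
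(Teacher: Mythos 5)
Your proposal is correct and follows essentially the same route as the paper's own proof: both write the conditional probability as the ratio $\Pr(S\cap \supp(X)=\{i\})/\Pr(|S\cap\supp(X)|=1)$, bound the numerator and denominator using the two items of Proposition~\ref{prop:basic_psample_facts} together with the trivial marginal and union bounds, and use $p \leq \tfrac{1}{4|\supp(X)|}$ to get $1-2p|\supp(X)| \geq \tfrac12$. No discrepancies to report.
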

\begin{proof}
See the supplementary.
\end{proof}

We are now ready to give our construction of an $\ell_0$ sampler.

\begin{proof}
Our sketch consists of $O(\log n^q)$ $p$-samples with sampling probabilities $p=1,\frac{1}{2},\frac{1}{4}\ldots, \frac{1}{n^q}$.  Additionally, we take $k$ independent such samples for each $p$ which we denote by $S^{(p)}_1, \ldots , S^{(p)}_k.$

There is some $p$ in $[\frac{1}{8|\supp(X)|}, \frac{1}{4|\supp(X)|}.$  For this value of $p$ and fixed $i$, we have that $\abs{S^{(p)}_i \cap \supp(X)} = 1$ with probability at least $1/8$ by Proposition~\ref{prop:basic_psample_facts}. Thus, the probability that this holds for some $i$ is at least $1 - (7/8)^k,$ which is at least $1-\delta$ for $k\geq \log\frac{1}{\delta}.$

Our $\ell_0$ sampling algorithm is to to iterate through the samples in increasing order of $p$, and within each level to iterate through the $S^{(p)}_i$'s in increasing order of $i.$  For each $(p,i)$ we run the singleton tester from Lemma~\ref{lem:singleton_tester}.  If the singleton tester accepts, then we use the $1$-sparse recovery scheme to return the appropriate index and value.  As we have seen, with probability at least $1-\delta$, the algorithm will return for some $p \leq \frac{1}{4|\supp(X)|}.$ We also use the optimization of Remark~\ref{rmk:intialize_once}, so that we can assume the $1$-sparse tester never yields false negatives.

As long as this occurs, the algorithm will be successful by the previous Lemma~\ref{lem:near_uniform_lemma} unless the first $S_i^{(p)}$ for which the singleton accepts is not $1$-sparse.  Call this event $E.$  If $E$ occurs, then the singleton tester must fail on one of $O(k \log n)$ trials, so we would like to make its failure probability at most $O(\delta/(k\log n).$ By Lemma~\ref{lem:singleton_tester} this can be accomplished using a singleton-testing sketch of size $O(\log n \log(k\log n/\delta))$ which is $O(\log n\log\log n + \log n\log\frac{1}{\delta})$ when $k = O(\log \frac{1}{\delta}).$

Now, we have $k$ singleton sketches for each of $\log n$ sampling probabilities $p$, resulting in a total sketching dimension of $O(k\log^2 n (\log\log n + \log\frac{1}{\delta})) = O(\log\frac{1}{\delta}\log^2 n (\log\log n + \log\frac{1}{\delta}))$
\end{proof}

\subsection{Conclusion and Future Work}
We gave a new approach to constructing sampling-based sketches which can be quickly applied to rank-one tensors with at most $3$ modes.  To demonstrate the utility of this approach we showed how to speed up sketching for $\ell_0$ estimation, and for constructing $\ell_1$ embeddings.  

A number of intriguing questions remain.  For example, are there other constructions of $p$-samples that apply to higher mode tensors?  Additionally, our sketches require $O(n)$ time to construct each entry of the sketch.  While this can be faster than the $O(n^2)$ or $O(n^3)$ required to expand a rank-one tensor, it seems reasonable to hope for sketches that takes closer to $O(1)$ time per entry. This is the case for CountSketch based constructions that work by sampling and then hashing entries of the sample into buckets. Unfortunately in our setting it is not clear how to efficiently compose our $p$-samples with a CountSketch.

Finally, it would interesting to extend our ideas to other problems where sampling-based sketches have been applied.

\section{Experiments}

We evaluate the correctness of the $\ell_0$ samplers. Our $\ell_0$-sampler is theoretically guaranteed to output a uniformly random entry of the support, up to some constant factor.  That is, the probability we output a fixed entry of the support of a tensor $X$ is $[\frac{c_1}{|\supp(X)|}, \frac{c_2}{|\supp(X)|}]$ for some absolute constants $c_1$ and $c_2.$  In order to keep our analysis simple, the constants $c_1$ and $c_2$ gotten from unwinding our proof are more extreme than necessary.  We remedy this by empirically showing that our sampler is in fact much closer to uniform.

We choose our $\ell_0$ sampler to have $10$ buckets at each sampling level.  For an $N\times N \times N$ tensor $X$ our sampling rates begin at $1/N^3$ and increase in powers of $5.$  While these parameters are somewhat differnt from what we chose theoretically, we find that they give good practical results.

In all experiments we work with $3$-mode tensors.
We consider two different support structures for our experiments.  The first support shape we consider, we call the ``disjoint rectangle" support.  In this model, the support consists of two boxes $B_1$ and $B_2$ of dimensions $(x_1,x_2,x_3)$ and $(y_1, y_2, y_3),$ such that

Recall that the first step of $p$-sample construction is to permute each mode separately, so up to permutation there is only one configuration of rectangles to consider.  Moreover, up to this permutation symmetry, all entries of $B_1$ are equivalent to one another, and all entries of $B_2$ are equivalent to one another.  Thus to test for uniformity, it suffices to check that the number of samples from $B_1$ is approximately $|B_1|/(|B_1| + |B_2|).$

The second support shape that we consider is a rectangle $B$ of dimensions $(x_1, x_2, x_3)$, along with an additional $x_1 x_2 x_3$ random entries (sampled from outside the rectangle) which we call $R$, the random component of the support.  Note that we choose the size of $R$ to be the same as the size of $B.$  This is mainly for convenience, so that our $\ell_0$ tester is successful if approximately half the samples lie in $R.$

In each of experiments we use the $p$-sample construction that we previously discussed to build an $\ell_0$ tester.  To make the experiments run faster, we assume access to a perfect singleton tester.  While such a singleton tester does not exist in practice, it can be approximated arbitrarily well using ideas that we previously discussed.  Here our aim is to understand how well our $p$-sample construction works as a proxy for a truly uniform sample.

All of our experiments suggest that our $\ell_0$ sampling procedure behaves very nearly perfectly on the tensors described above.  All experiments are accurate to within a few percent of what one would expect for uniform sampling.  See section~\ref{sec:experiment_data} in the appendix for tables showing our experimental data. 

The code for the experiments, along with additional implementations of our sketch are available at \url{https://github.com/wswartworth/tensorSampling}.




\printbibliography

@inproceedings{li2021exponentially,
  title={Exponentially improved dimensionality reduction for l1: Subspace embeddings and independence testing},
  author={Li, Yi and Woodruff, David and Yasuda, Taisuke},
  booktitle={Conference on Learning Theory},
  pages={3111--3195},
  year={2021},
  organization={PMLR}
}

@book{kailath1999fast,
  title={Fast reliable algorithms for matrices with structure},
  author={Kailath, Thomas and Sayed, Ali H},
  year={1999},
  publisher={SIAM}
}

@inproceedings{pham2013fast,
  title={Fast and scalable polynomial kernels via explicit feature maps},
  author={Pham, Ninh and Pagh, Rasmus},
  booktitle={Proceedings of the 19th ACM SIGKDD international conference on Knowledge discovery and data mining},
  pages={239--247},
  year={2013}
}

@inproceedings{ahle2020oblivious,
  title={Oblivious sketching of high-degree polynomial kernels},
  author={Ahle, Thomas D and Kapralov, Michael and Knudsen, Jakob BT and Pagh, Rasmus and Velingker, Ameya and Woodruff, David P and Zandieh, Amir},
  booktitle={Proceedings of the Fourteenth Annual ACM-SIAM Symposium on Discrete Algorithms},
  pages={141--160},
  year={2020},
  organization={SIAM}
}

@article{ahle2019almost,
  title={Almost optimal tensor sketch},
  author={Ahle, Thomas D and Knudsen, Jakob BT},
  journal={arXiv preprint arXiv:1909.01821},
  year={2019}
}

@inproceedings{clarkson2014sketching,
  title={Sketching for M-estimators: A unified approach to robust regression},
  author={Clarkson, Kenneth L and Woodruff, David P},
  booktitle={Proceedings of the twenty-sixth annual ACM-SIAM symposium on Discrete algorithms},
  pages={921--939},
  year={2014},
  organization={SIAM}
}

@inproceedings{verbin2012rademacher,
  title={Rademacher-sketch: A dimensionality-reducing embedding for sum-product norms, with an application to earth-mover distance},
  author={Verbin, Elad and Zhang, Qin},
  booktitle={International Colloquium on Automata, Languages, and Programming},
  pages={834--845},
  year={2012},
  organization={Springer}
}

@inproceedings{indyk2008declaring,
  title={Declaring independence via the sketching of sketches.},
  author={Indyk, Piotr and McGregor, Andrew},
  booktitle={SODA},
  volume={8},
  pages={737--745},
  year={2008}
}

@article{cormode2014unifying,
  title={A unifying framework for ℓ 0-sampling algorithms},
  author={Cormode, Graham and Firmani, Donatella},
  journal={Distributed and Parallel Databases},
  volume={32},
  pages={315--335},
  year={2014},
  publisher={Springer}
}

@inproceedings{woodruff2018distributed,
  title={Distributed statistical estimation of matrix products with applications},
  author={Woodruff, David P and Zhang, Qin},
  booktitle={Proceedings of the 37th ACM SIGMOD-SIGACT-SIGAI Symposium on Principles of Database Systems},
  pages={383--394},
  year={2018}
}

@inproceedings{alon1996space,
  title={The space complexity of approximating the frequency moments},
  author={Alon, Noga and Matias, Yossi and Szegedy, Mario},
  booktitle={Proceedings of the twenty-eighth annual ACM symposium on Theory of computing},
  pages={20--29},
  year={1996}
}

@article{woodruff2014sketching,
  title={Sketching as a tool for numerical linear algebra},
  author={Woodruff, David P and others},
  journal={Foundations and Trends{\textregistered} in Theoretical Computer Science},
  volume={10},
  number={1--2},
  pages={1--157},
  year={2014},
  publisher={Now Publishers, Inc.}
}

@article{indyk2006stable,
  title={Stable distributions, pseudorandom generators, embeddings, and data stream computation},
  author={Indyk, Piotr},
  journal={Journal of the ACM (JACM)},
  volume={53},
  number={3},
  pages={307--323},
  year={2006},
  publisher={ACM New York, NY, USA}
}

@inproceedings{munteanu2021oblivious,
  title={Oblivious sketching for logistic regression},
  author={Munteanu, Alexander and Omlor, Simon and Woodruff, David},
  booktitle={International Conference on Machine Learning},
  pages={7861--7871},
  year={2021},
  organization={PMLR}
}

@article{meister2019tight,
  title={Tight dimensionality reduction for sketching low degree polynomial kernels},
  author={Meister, Michela and Sarlos, Tamas and Woodruff, David},
  journal={Advances in Neural Information Processing Systems},
  volume={32},
  year={2019}
}

\newpage
\appendix
\onecolumn

\section{Construction of an $\ell_1$ Embedding}

We design a sketch for $\ell_1$ embeddings similar to \cite{verbin2012rademacher}.  However we show how to design the sketch in such a way so as to allow fast application to rank-one tensors.



We describe the general structure of our sketch here.  We will later choose parameters in order to obtain the desired guarantees.
\paragraph{Description of sketch.}  Our sketch $\ellonesketch$ consists of a series of $\ell$ sampling levels numbered $0, \ldots, \ell-1$.  At each level $h$, there are two parameters: $T_h$ and $p_h.$  $T_h$ is the number of buckets at level $h$, and $p_h$ is the sampling probability at level $h.$  Each bucket in level $h$ is formed by taking a random sign combination of coordinates corresponding to a $(p_h/T_h)$-sample.  So if $S$ is a $(p_h/T_h)$-sample, a bucket in level $h$ takes the form
\[
\frac{1}{p_h}\sum_{i \in S} \sigma_i x_i,
\]
where $x\in \R^n$ is the vector that we sketch, and $\sigma \in\{+1,-1\}^n$.  We will choose $\sigma$ so that $\sigma_i$ is a Kronecker product of $q$ random sign vectors, each of which is $4$-wise independent. 
We use the notation $B_1^{(h)}, \ldots, B_{T_h}^{(h)}$ to represent the buckets at sampling level $h.$  We will allow $p_h$ to be larger than $1$, as long as $p_h/T_h \leq 1.$


\paragraph{$\ell_1$ sketch analysis} We first give the following technical lemma, which is given here to shorten the proof of the following lemma.
\begin{lemma}
\label{lem:chebyshev_chernoff_fusion}
Let $x\in\R^d$ have entries in $[0,\alpha].$  Let $X_1, \ldots, X_k \sim X$ be i.i.d. with $\Pr(X = \abs{x_j}) \geq p$ for all $j.$  Then for $k \geq c\log\frac{1}{\delta}\max(\frac{\alpha}{p\norm{x}{1}},1)$, we have $\frac{1}{k}(X_1 + \ldots + X_k) \geq \frac{1}{4} p \norm{x}{1}$ with probability at least $1-\delta.$
\end{lemma}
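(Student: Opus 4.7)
The plan is to apply a Bernstein-type concentration inequality to the empirical mean $\bar X_k := \tfrac{1}{k}(X_1+\dots+X_k)$. Bernstein is natural here because it fuses a Chebyshev-style variance bound with a Chernoff-style exponential tail, which is exactly what the two regimes captured by $\max(\alpha/(p\|x\|_1),1)$ in the hypothesis demand.

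First I would record two moment bounds on $X$. From the hypothesis $\Pr(X = x_j) \ge p$ for each $j$, interpreted so that the corresponding events contribute additively to the expectation, we get
\[
\E[X] \;\ge\; \sum_{j=1}^d x_j\,\Pr(X = x_j) \;\ge\; p\,\|x\|_1.
\]
Since $X\in[0,\alpha]$ almost surely, $\E[X^2]\le \alpha\,\E[X]$, and hence $\mathrm{Var}(X)\le \alpha\,\E[X]$. These are exactly the two ingredients Bernstein requires: a linear-in-$\alpha$ variance bound and an $\alpha$ almost-sure range bound.

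Next I would apply Bernstein's inequality to $S_k = \sum_{i=1}^k X_i$ at deviation $t = \E[S_k]/2$, using $\mathrm{Var}(S_k)\le k\alpha\,\E[X]$ and the a.s.\ bound $X_i\le \alpha$. This yields a failure probability of at most $\exp(-c\,k\,\E[X]/\alpha) \le \exp(-c\,kp\|x\|_1/\alpha)$ for an absolute constant $c>0$. Requiring this to be at most $\delta$ gives the sufficient condition $k \gtrsim (\alpha/(p\|x\|_1))\log(1/\delta)$. When $\alpha \le p\|x\|_1$ the Bernstein requirement is weaker than $k \gtrsim \log(1/\delta)$, which is in turn enforced by the ``$\max(\cdot,1)$'' in the hypothesis; so the hypothesis on $k$ suffices in both regimes.

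Finally, on the complementary event (probability at least $1-\delta$) we have $S_k \ge \E[S_k]/2 \ge kp\|x\|_1/2 \ge kp\|x\|_1/4$, which yields $\bar X_k \ge \tfrac14 p\|x\|_1$, as desired. The only nontrivial step is pinning down the first-moment inequality $\E[X]\ge p\|x\|_1$ cleanly from the hypothesis (one has to rule out, or handle, pathological cases in which distinct indices $j$ collapse to the same value $|x_j|$ or where $pd > 1$ would otherwise over-count); once that is in place, the concentration step is a routine application of Bernstein's inequality.
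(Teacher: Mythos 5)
Your proof is correct, and it takes a genuinely different route from the paper's. The paper first reduces (by majorization) to the case $\Pr(X=x_j)=p$ exactly, computes $\E(S_k)=pk\norm{x}{1}$ and $\mathrm{Var}(S_k)\le pk\alpha\norm{x}{1}$, and then runs a two-stage amplification: Chebyshev shows a group of $\Theta(\alpha/(p\norm{x}{1}))$ samples exceeds half its mean with probability $3/4$, and a Chernoff bound over $\Theta(\log(1/\delta))$ such groups boosts this to $1-\delta$ (whence the factor $1/4$ as ``half of half''). You instead invoke Bernstein's inequality once on $S_k$ at deviation $\E(S_k)/2$, using the same two ingredients ($\mathrm{Var}(X)\le\alpha\E[X]$ and the a.s.\ bound $\alpha$), which yields failure probability $\exp(-ck\E[X]/\alpha)\le\exp(-ckp\norm{x}{1}/\alpha)$ and hence the same condition on $k$; your treatment of the two regimes of the $\max$ is also right. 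The Bernstein route is shorter and gives the slightly stronger conclusion $\bar X_k\ge\tfrac12 p\norm{x}{1}$; the paper's grouping argument buys only elementarity (no need to cite Bernstein). One point you correctly flag, which the paper glosses over equally: the literal inequality $\E[X]\ge\sum_j x_j\Pr(X=|x_j|)$ fails if distinct indices share a value, so the hypothesis must be read index-wise (as it is in the application, where the events $S\cap L=\{j\}$ are disjoint across $j$); with that reading both proofs go through.
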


\begin{proof}
It suffices to consider the case where $\Pr(X = x_j) = p$ for all $j$ and is $0$ with the remaining probability, since this majorizes the random variable in the lemma.  Note that $\E(X_1 + \ldots + X_k) = pk\norm{x}{1}$.  We also have 
\[
\E((X_1 + \ldots + X_k)^2) = pk\norm{x}{2}^2 + k(k-1)p^2\norm{x}{1}^2,
\]
So
\[
\text{Var}(X_1 + \ldots + X_k) \leq pk\norm{x}{2}^2 \leq pk\alpha\norm{x}{1}.
\]

For $k\geq \frac{16\alpha}{p\norm{x}{1}}$ we have $\sqrt{pk\alpha\norm{x}{1}} \leq \frac{1}{4}pk\norm{x}{1}$, and so
by Chebyshev's inequality,
\[
\Pr(X_1 + \ldots + X_k \leq \frac{1}{2}pk\norm{x}{1})
\leq \frac{1}{4}.
\]

If instead we have $k\geq (24\log(1/\delta))\cdot \max(\frac{16\alpha}{p\norm{x}{1}},1)$, then group the sum into $24\log(1/\delta)$ terms (each group of size at least one), each of which is at least half their expectation with $3/4$ probability.  By a Chernoff bound, at least half the groups will be at least this large with $1-\delta$ probability, which means that the entire sum will be at least $1/4$ its expectation, as desired.
\end{proof}

Our next task is to lower bound the contribution of a given sampling level.

\begin{lemma}
\label{lem:no_contraction}
Suppose that the number of modes $q$ is at most $3$. Consider a sampling level of our sketch as described above, with sampling probability $p$ and $T$ buckets. Suppose that $\norm{x}{1} = 1$.  Fix $m$ and $\gamma$ and let $L$ be the set of coordinates of $x$ with magnitude in $(m, \gamma m]$.  Then for $(p/T)|L| \leq 1/4,$ $T \geq c\log(1/\delta)$ and $p \geq c\log(1/\delta) \frac{\gamma m}{\norm{x_L}{1}}$, we have
\[
\abs{B_1} + \ldots + \abs{B_T} \geq c\norm{x_L}{1}
\]
with probability at least $1-\delta.$
\end{lemma}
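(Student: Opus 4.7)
The plan is to lower-bound each $|B_t|$ by a quantity supported on ``isolation events'' and then apply Lemma~\ref{lem:chebyshev_chernoff_fusion} to the i.i.d. bucket values $|B_1|, \ldots, |B_T|$. For each $i\in L$ and bucket $t$, define the event $E_{t,i}$ that $S_t \cap L = \{i\}$ and $|B_t| \geq |x_i|/p$; then $|B_t| \geq \sum_{i\in L}(|x_i|/p)\mathbf{1}[E_{t,i}]$ since these events are mutually exclusive over $i$.

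First, by Proposition~\ref{prop:basic_psample_facts} applied to the $(p/T)$-sample $S_t$ and the set $L$, the isolation event $\{S_t \cap L = \{i\}\}$ has probability at least $(p/(2T))(1 - 2(p/T)|L|) \geq p/(4T)$, using the hypothesis $(p/T)|L| \leq 1/4$. Next, conditional on $S_t \cap L = \{i\}$, I use a sign-flip chain to control $|B_t|$. Writing the mode-$\ell$ sign vector as $\sigma^{(\ell)}$, so $\sigma_j = \prod_\ell \sigma^{(\ell)}(j_\ell)$, define
\[
A_k = \frac{1}{p}\sum_{j\in S_t:\, j_\ell = i_\ell\, \forall\, \ell \leq k}\,\prod_{\ell > k}\sigma^{(\ell)}(j_\ell)\, x_j,
\]
so that $A_0 = B_t$ and $A_q = x_i/p$ (the latter uses that $S_t\cap L=\{i\}$ forces $j=i$ when all $q$ coordinates match). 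Each transition decomposes as $A_k = \sigma^{(k+1)}(i_{k+1})\,A_{k+1} + \tilde R_k$, where $A_{k+1}$ and $\tilde R_k$ depend only on mode-$(k+1)$ signs at positions other than $i_{k+1}$ together with modes $>k+1$, and are therefore independent of $\sigma^{(k+1)}(i_{k+1})$ by pairwise mode-independence. Since $\max(|a+b|,|a-b|) = |a|+|b|$, we get $\Pr(|A_k|\geq |A_{k+1}|)\geq 1/2$; because the flips at different modes use disjoint independent signs, these events chain, giving $\Pr(|B_t| \geq |x_i|/p \mid S_t\cap L = \{i\}) \geq (1/2)^q \geq 1/8$ for $q\leq 3$.

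Combining, $\Pr(E_{t,i}) \geq p/(32T)$. Set $Z_t = \sum_{i\in L}(|x_i|/p)\mathbf{1}[E_{t,i}]$, so $Z_t \leq |B_t|$, the $Z_t$'s are i.i.d. bounded by $\gamma m/p$, and $\Pr(Z_t = |x_i|/p) \geq p/(32T)$ for each $i\in L$. Apply Lemma~\ref{lem:chebyshev_chernoff_fusion} with $\alpha = \gamma m/p$, ``probability'' $p/(32T)$, and vector $v = (|x_i|/p)_{i\in L}$ of $\ell_1$-norm $\|x_L\|_1/p$. The sample-size condition $T\geq c\log(1/\delta)\max(\alpha/((p/(32T))\|v\|_1),\,1)$ simplifies algebraically to the given hypotheses $T \geq c'\log(1/\delta)$ and $p \geq c''\log(1/\delta)\gamma m/\|x_L\|_1$. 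The lemma's conclusion gives $\sum Z_t \geq c'''\|x_L\|_1$ with probability $1-\delta$, whence $\sum |B_t|\geq \sum Z_t \geq c'''\|x_L\|_1$.

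The main subtlety is verifying that the sign-flip chain across modes composes independently: this works because $\sigma^{(k+1)}(i_{k+1})$ affects only $A_k$ and is independent of every sign that determines $A_{k+1}$ and $\tilde R_k$, so the $q$ mode-flip events are mutually independent and their probabilities multiply. The restriction $q\leq 3$ in the lemma is exactly where this composition stops being cheap: for larger $q$ the $(1/2)^q$ factor would force $T$ or $p$ to grow correspondingly.
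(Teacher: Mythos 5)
Your overall architecture matches the paper's: isolate a coordinate of $L$ in each bucket via Proposition~\ref{prop:basic_psample_facts} (probability at least $p/(4T)$), argue that conditioned on isolation the bucket retains a constant multiple of $|x_i|/p$ with constant probability, and then feed the resulting i.i.d.\ lower bounds into Lemma~\ref{lem:chebyshev_chernoff_fusion}. Your bookkeeping for that last step (the choice $\alpha=\gamma m/p$, the cancellation of $T$, and the resulting conditions on $T$ and $p$) is correct and is exactly how the paper closes the argument.

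The gap is in your sign-flip chain. You write $A_k=\sigma^{(k+1)}(i_{k+1})A_{k+1}+\tilde R_k$ and assert that $\sigma^{(k+1)}(i_{k+1})$ is independent of $(A_{k+1},\tilde R_k)$ ``by pairwise mode-independence.'' Independence across modes does make $\sigma^{(k+1)}(i_{k+1})$ independent of $A_{k+1}$, but $\tilde R_k$ is a function of \emph{many} other coordinates of the \emph{same} mode-$(k+1)$ sign vector, and the sketch specifies that each mode's sign vector is only $4$-wise independent. Under $4$-wise independence, $\sigma^{(k+1)}(i_{k+1})$ need not be independent of the joint distribution of $\{\sigma^{(k+1)}(j)\}_{j\neq i_{k+1}}$, so you cannot condition on $\tilde R_k$ and still treat $\sigma^{(k+1)}(i_{k+1})$ as a fresh uniform sign; the $\max(|a+b|,|a-b|)\geq |a|$ trick is therefore not licensed. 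Your argument is valid for fully independent mode signs, but that is a different (larger-seed) construction than the one the lemma is about. The paper instead obtains the same conclusion from limited independence via an AMS-type second-moment (Paley--Zygmund) bound: for a $4$-wise independent $\tau$ and the fiber $y$ through the isolated coordinate $i$, one has $|\langle \tau, y\rangle| \geq c\norm{y}{2} \geq c|x_i|$ with constant probability, and this is iterated mode by mode using independence across modes. Substituting that estimate for your flip step (at the cost of a constant $c<1$ in front of $|x_i|/p$, which the lemma's conclusion tolerates) repairs the proof without changing anything else.
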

\begin{proof}
For convenience, set $p' = p/T.$ Consider a fixed bucket $B_k$ formed from $p'$-sample $S_k$ and random signs $\sigma_k$.  Also consider fixed $j\in L.$  The probability that $j \in S_k$ is at least $p'/2$ by definition of the sampler.  

Recall that $\sigma_k$ is a product of $q\in\{2,3\}$ $4$-wise independent sign vectors, say $\tau_1, \tau_2, \tau_3$  We claim that conditioned on $j\in S_k$, we have $\abs{B_k} \geq \frac{c}{p}\abs{x_j}$ with constant probability.  To see this, consider first sketching by $\tau_1$ to compress along the first mode.  For the fiber $y$ of $x$ that contains $j$, $\abs{\inner{\tau_1}{y}}$ is at least $c\norm{y}{2} \geq c\norm{x_j}{2}$ with constant probability, since $\tau_1$ has $4$-wise independent signs \footnote{This follows for example from the classic analysis of the AMS sketch\cite{alon1996space}. Taking the average of $O(1)$ such sign measurements gives a $1 \pm (1/3)$ multiplicative approximation to the $\ell_2$ norm of $y$ with $2/3$ probability say (by a Chebyshev bound). Thus, each measurement individually must be $c\norm{y}{2}$ with constant probability.}. Then iteratively apply the same argument to sketching by $\tau_2$ and $\tau_3$ along the remaining modes.


Using the assumption that $p'|L|\leq 1/4,$ we have that for $i\in L$, $\Pr(|S_k \cap L| = \{i\}) \geq p'/4$ by \ref{prop:basic_psample_facts}. Applying the previous Lemma~\ref{lem:chebyshev_chernoff_fusion} with $p'/8$ gives that 
\[
\frac{1}{T}(p\abs{B_1} + \ldots + p\abs{B_T}) \geq \frac{1}{4}(p'/8)\norm{x}{1}
\]
with probability at least $1-\delta$ when $T \geq c \log\frac{1}{\delta} \max(\frac{\gamma m}{(p'/8) \norm{x_L}{1}},1)$.  Recalling that $p' = p/T,$ this rearranges to the desired bound.







\end{proof}

\subsection{No Dilation}

The following proposition will help to bound the contribution of the small coordinates to a given sampling level.  This will be the key place where we take advantage of cancellation.  Unlike earlier analyses such as \cite{verbin2012rademacher}, we do not have independence, so instead we will show how to get away with a Markov bound (earlier analyses could have applied a similar argument).

\begin{proposition}
Suppose that $x\in \R^n$ with $\norm{x}{1}\leq 1$ and $\norm{x}{\infty}\leq \alpha.$  Let $\sigma_1, \ldots, \sigma_n \in \{0, -1, 1\}$ be such that $\E(\sigma_i \sigma_j) = 0$ for all $i\neq j$ and $\E(\abs{\sigma_i}) \leq q.$  Then
\[
\E(\abs{\sigma_1 y_1 + \ldots + \sigma_n y_n})
\leq \min(\sqrt{\alpha q},q).
\]
\end{proposition}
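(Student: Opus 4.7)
The plan is to prove the two bounds separately and take the minimum. The bound $q$ will come from a direct triangle inequality, while the bound $\sqrt{\alpha q}$ will follow from a Cauchy--Schwarz / second moment calculation that crucially uses the pairwise uncorrelatedness hypothesis $\E(\sigma_i\sigma_j)=0$ for $i\neq j$.

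For the easy bound, I would just apply the triangle inequality followed by linearity of expectation:
\[
\E\left(\left|\sum_{i=1}^n \sigma_i y_i\right|\right) \leq \sum_{i=1}^n \E(|\sigma_i|)|y_i| \leq q \sum_{i=1}^n |y_i| = q\|y\|_1 \leq q.
\]
(I assume $y=x$ here; the statement switches notation but the hypotheses are on a single vector.)

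For the $\sqrt{\alpha q}$ bound I would pass to the second moment. Set $Z = \sum_i \sigma_i y_i$. By Jensen's inequality $\E|Z| \leq \sqrt{\E Z^2}$. Expanding the square and using the pairwise uncorrelatedness assumption,
\[
\E Z^2 = \sum_{i,j}\E(\sigma_i\sigma_j)\, y_i y_j = \sum_{i} \E(\sigma_i^2)\, y_i^2 = \sum_i \E(|\sigma_i|)\, y_i^2,
\]
where I used $\sigma_i \in \{-1,0,1\}$ so that $\sigma_i^2 = |\sigma_i|$. Bounding $\E(|\sigma_i|)\leq q$ and then using $\sum_i y_i^2 \leq \|y\|_\infty\|y\|_1 \leq \alpha\cdot 1 = \alpha$ gives $\E Z^2 \leq \alpha q$, so $\E|Z|\leq \sqrt{\alpha q}$.

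Taking the better of the two bounds yields the claim. There is no real obstacle here — the only subtle point to flag is the identity $\sigma_i^2 = |\sigma_i|$, which is what lets us convert the assumed $L^1$-type control $\E|\sigma_i|\leq q$ into the $L^2$-type quantity $\E\sigma_i^2$ needed by the second-moment calculation; without the $\{-1,0,+1\}$ restriction one would lose a factor in this step.
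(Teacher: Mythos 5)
Your proof is correct and follows essentially the same route as the paper's: the triangle inequality plus linearity for the bound $q$, and Jensen followed by a second-moment expansion (using $\E(\sigma_i\sigma_j)=0$ to kill cross terms, $\sigma_i^2=|\sigma_i|$, and $\norm{y}{2}^2\leq\norm{y}{\infty}\norm{y}{1}\leq\alpha$) for the bound $\sqrt{\alpha q}$. Your remark about the $x$ versus $y$ notation mismatch in the statement is also apt; the paper silently makes the same identification.
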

\begin{proof}
By Jensen's inequality,
\begin{align*}
\E(\abs{\sigma_1 y_1 + \ldots + \sigma_n y_n})^2
&\leq \E\abs{\sigma_1 y_1 + \ldots + \sigma_n y_n}^2\\
&= \E(\sigma_1^2 y_1^2 + \ldots + \sigma_n^2 y_n^2)\\
&\leq qy_1^2 + \ldots + qy_n^2\\
& = q \norm{y}{2}^2.
\end{align*}
Note that $\norm{y}{2}^2 \leq \norm{y}{1}\norm{y}{\infty} \leq \alpha$, so the first part of the bound follows.  For the second half, we have
\begin{align*}
\E(\abs{\sigma_1 y_1 + \ldots + \sigma_n y_n})
&\leq \E(\abs{\sigma_1}\abs{y_1} + \ldots +\abs{\sigma_n}\abs{y_n} )\\
&\leq q \norm{y}{1} \leq q.
\end{align*}

\end{proof}

\begin{lemma}
\label{lem:general_dilation_bound}
Consider applying our $\ell_1$-embedding sketch $\ellonesketch$ to a vector $x.$ Then with with failure probability at most, $p_1/\beta_1 + \ldots + p_{\ell}/\beta_{\ell} + 1/10$, we have the bound
\[
\norm{\ellonesketch x}{1} \leq 10\left(\sum_i \norm{x_{[\alpha_i, \beta_i]}}{1} + \sum_i \min(\sqrt{T_i \alpha_i/p_i}, 1)\right)
\]
\end{lemma}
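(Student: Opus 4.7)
My plan is to control $\norm{\ellonesketch x}{1}$ level by level, separating the entries of $x$ by magnitude. Assume without loss of generality that $\norm{x}{1}\le 1$. For each level $h$ split $x$ into heavy, mid, and light parts $x_H^{(h)}, x_M^{(h)}, x_L^{(h)}$, consisting of coordinates with magnitude $>\beta_h$, in $[\alpha_h,\beta_h]$, and $<\alpha_h$, respectively; by linearity each bucket $B_k^{(h)}$ splits as $B_{k,H}^{(h)}+B_{k,M}^{(h)}+B_{k,L}^{(h)}$. The heavy set at level $h$ has at most $1/\beta_h$ elements, and by the $p$-sample definition each entry lands in a given bucket with probability at most $p_h/T_h$. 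Union bounding over the $T_h$ buckets and then over the heavy entries shows that some heavy entry is sampled at level $h$ with probability at most $p_h/\beta_h$; another union bound over the $\ell$ levels gives that the event $E := \{\text{no heavy entry is ever sampled}\}$ fails with probability at most $\sum_h p_h/\beta_h$.

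On $E$, the heavy contributions vanish, so $\norm{\ellonesketch x}{1} \le Y := \sum_{h,k}\bigl(|B_{k,M}^{(h)}|+|B_{k,L}^{(h)}|\bigr)$. For the mid part, the triangle inequality gives $|B_{k,M}^{(h)}|\le \frac{1}{p_h}\sum_{i\in S_k^{(h)}}|(x_M^{(h)})_i|$, whose expectation is $\frac{1}{T_h}\norm{x_M^{(h)}}{1}$; summing over the $T_h$ buckets yields $\sum_k \E|B_{k,M}^{(h)}|\le \norm{x_{[\alpha_h,\beta_h]}}{1}$. For the light part I will apply the preceding proposition to the composite signs $\tilde\sigma_i:=\mathbf{1}_{i\in S_k^{(h)}}\sigma_i\in\{-1,0,1\}$. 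Since the sampler is independent of $\sigma$ and $\sigma$ is a Kronecker product of (at least) pairwise independent sign vectors, $\E(\sigma_i\sigma_j)=0$ for any two distinct multi-indices (some mode must disagree, contributing a mean-zero factor), and therefore $\E(\tilde\sigma_i\tilde\sigma_j)=0$ as well; moreover $\E|\tilde\sigma_i|\le p_h/T_h$. The proposition, applied with $y=x_L^{(h)}$, $\alpha=\alpha_h$, and $q=p_h/T_h$, yields
\[
\E|B_{k,L}^{(h)}| \le \frac{1}{p_h}\min\bigl(\sqrt{\alpha_h p_h/T_h},\;p_h/T_h\bigr),
\]
and summing over $k=1,\dots,T_h$ gives $\sum_k \E|B_{k,L}^{(h)}|\le \min(\sqrt{T_h\alpha_h/p_h},\,1)$.

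Summing these bounds over levels yields $\E Y \le \sum_i \norm{x_{[\alpha_i,\beta_i]}}{1} + \sum_i \min(\sqrt{T_i\alpha_i/p_i},\,1)$. Markov's inequality then gives $Y>10\,\E Y$ with probability at most $1/10$. Since $\norm{\ellonesketch x}{1}\le Y$ on $E$, a final union bound of the events $E^c$ and $\{Y>10\,\E Y\}$ gives the claimed failure probability $\sum_i p_i/\beta_i + 1/10$.

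The main technical point is the light-part analysis, specifically checking that the pairwise-cancellation hypothesis of the proposition survives when the Kronecker-structured signs $\sigma$ are multiplied by the indicator of a $(p_h/T_h)$-sample. This uses both the independence of the sampler from $\sigma$ and the fact that Kronecker products of even pairwise independent sign vectors satisfy $\E(\sigma_i\sigma_j)=0$ for distinct multi-indices. Everything else---the heavy union bound, the triangle inequality for the mid part, and the concluding Markov step---is routine bookkeeping.
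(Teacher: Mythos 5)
Your proposal is correct and follows essentially the same route as the paper's proof: a per-level heavy/mid/light split, a union bound over the at most $1/\beta_h$ heavy coordinates, unbiasedness for the mid range, the cancellation proposition for the light part, and a concluding Markov bound. You simply make explicit two steps the paper leaves implicit — the verification that the composite signs $\mathbf{1}_{i\in S}\sigma_i$ satisfy the proposition's pairwise-cancellation hypothesis, and the final Markov/union-bound bookkeeping — both of which check out.
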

\begin{proof}

Let $E_i$ be the event of sampling a coordinate larger than $\beta_i$ in layer $i$.  There are at most $1/\beta_i$ such coordinates, so the probability that $E_i$ occurs is at most $p/\beta_i$ by a union bound.

Now let $x_{\leq \alpha_i}$ be $x$ where all coordinates larger in magnitude than $\alpha_i$ are zeroed out.  The previous lemma bounds the expected contribution of $x_{\leq \alpha_i}$ to layer $i$ by $ \min(\sqrt{T\alpha_i/q}, 1)$ (after rescaling).

Now let $x_{[\alpha_i, \beta_i]}$ contain all coordinates of $x$ between $\alpha_i$ and $\beta_i$ with the remaining coordinates zeroed out.  The expected contribution of $x_{[\alpha_i, \beta_i]}$ to layer $i$ is $\norm{x_{[\alpha_i, \beta_i]}}{1}$ since the layers are unbiased.

Thus with failure probability at most, $p_1/\beta_1 + \ldots + p_{\ell}/\beta_{\ell} + 1/10$, we have the bound
\[
10\left(\sum_i \norm{x_{[\alpha_i, \beta_i]}}{1} + \sum_i \min(\sqrt{T_i \alpha_i/p_i}, 1)\right)
\]
on the total $\ell_1$ norm of the sketch.
\end{proof}

As a corollary, a standard net argument, such as that given in \cite{clarkson2014sketching} for example, extends our embedding result for a single vector to a subspace.
\begin{corollary}
Let $A \in \R^{n\times d}$ be a matrix.  There is an oblivious sketch $\ellonesketch$ with sketching dimension $m = O(d^2\log n + \log^4 n)$ such that
\[
c_1 \norm{Av}{1} \leq \norm{\ellonesketch Av}{1} \leq c_2\norm{Av}{1},
\]
where $0 < c_1 < c_2$ are absolute constants.  Moreover all entries of $\ellonesketch$ can be taken to be in $\{0,+1,-1\}$, and $\ellonesketch$ can be applied to a tensor in $(\R^k)^{\otimes 2}$ in $O(mk)$ time and can be applied to a tensor in $(\R^k)^{\otimes 3}$ in $O(mk\log^2 k)$ time.
\end{corollary}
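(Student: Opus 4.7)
The plan is to set $\delta = \exp(-Cd)$ in Theorem~1.2 for a sufficiently large absolute constant $C$; this choice yields sketching dimension $m = O(d^2\log n + \log^4 n)$, matching the bound claimed. With this $\delta$, Theorem~1.2 ensures, for any fixed $y \in \R^n$, that $\|\ellonesketch y\|_1 \ge c_1\|y\|_1$ holds with probability at least $1 - e^{-Cd}$ and $\|\ellonesketch y\|_1 \le c_2\|y\|_1$ holds with constant probability.

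Let $U = \operatorname{colsp}(A)$. I would form an $\epsilon$-net $\mathcal{N}$ of the $\ell_1$-unit ball of $U$, with $\epsilon$ a small absolute constant chosen below; a standard volume argument in the $d$-dimensional normed space $(U,\|\cdot\|_1)$ gives $|\mathcal{N}| \le e^{c_0 d}$ for some absolute constant $c_0 = c_0(\epsilon)$. A union bound over $\mathcal{N}$ using Theorem~1.2's no-contraction guarantee then yields, with probability at least $1 - e^{c_0 d}\cdot e^{-Cd} \ge 0.99$ once $C$ is large enough,
\[
\|\ellonesketch y\|_1 \ge c_1\|y\|_1 \quad \text{for every } y \in \mathcal{N}.
\]
The standard net-extension argument of \cite{clarkson2014sketching} then applies: any $y\in U$ admits a decomposition $y = \sum_{i\ge 0}\epsilon^i u_i$ with each $u_i\in\mathcal{N}$, and combining a lower bound on the leading term with a uniform upper bound on the remaining terms (summed via the geometric series in $\epsilon$) yields $\|\ellonesketch y\|_1 \ge (c_1/2)\|y\|_1$ for every $y \in U$, provided $\epsilon$ is chosen proportional to $c_1/c_2$.

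The main obstacle is establishing a \emph{uniform} upper bound on $\|\ellonesketch y\|_1$ across the net, since Theorem~1.2 only provides single-vector no-dilation with constant probability, which is too weak for a direct union bound over $\mathcal{N}$. To strengthen this, I would exploit the fact that, at each of the $O(\log n)$ sampling levels, $\|\ellonesketch y\|_1$ is a sum of $T_h$ independent, bounded bucket magnitudes. A Bernstein-type inequality then yields single-vector no-dilation with failure probability $\exp(-\Omega(T_h))$, and the parameter choices already dictated by the $d^2\log n$ no-contraction dimension make $T_h$ large enough that this failure probability is at most $e^{-\Omega(d)}$, at no additional dimensional cost. A union bound over $\mathcal{N}$ then delivers uniform no-dilation on the net, which the extension argument above promotes to a uniform upper bound $\|\ellonesketch y\|_1 \le c_2'\|y\|_1$ over all of $U$.

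The remaining claims are immediate: the entries of $\ellonesketch$ lie in $\{0,+1,-1\}$ by Theorem~1.2, and the runtime bounds for rank-one tensor inputs follow by multiplying the per-row runtime from Theorem~1.2 ($O(k)$ for two modes and $O(k\log^2 k)$ for three modes) by the total number of rows $m$.
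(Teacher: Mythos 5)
Your overall plan --- set $\delta = \exp(-Cd)$, union-bound the no-contraction guarantee over an $\exp(O(d))$-size net of the $\ell_1$-unit ball of the column space, and extend to all of $U$ by successive approximation --- is exactly the route the paper intends (it does not spell the argument out; it simply invokes the standard net argument of \cite{clarkson2014sketching}), and your dimension and runtime accounting are correct. The gap is in the step you yourself flag as the main obstacle. The proposed fix --- a Bernstein bound over the $T_h$ independent buckets to upgrade single-vector no-dilation to failure probability $\exp(-\Omega(T_h))$ --- does not work, because the bucket magnitudes are not uniformly bounded and the dominant failure mode for dilation is not a concentration failure. Concretely, after the $1/p_h$ rescaling, a bucket at level $h$ that happens to sample a coordinate of magnitude exceeding $\beta_h$ contributes at least $\beta_h/p_h$ to $\norm{\ellonesketch x}{1}$, and (as in Lemma~\ref{lem:general_dilation_bound}) the probability that some bucket at level $h$ does so is $\Theta(p_h/\beta_h)$; this is only made a small constant, summed over levels, by the choice $q \le c/\ell$. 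For example, for $x = e_1$ the deepest level yields $\norm{\ellonesketch x}{1} \gtrsim (1/p_{\ell-1})\norm{x}{1}$ with probability about $p_{\ell-1}$ --- a heavy tail that no exponential concentration bound can suppress to $e^{-\Omega(d)}$ without changing the sketch parameters.

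The standard argument obtains the uniform upper bound needed in the extension step differently: fix a well-conditioned (Auerbach-type) basis $A_1,\ldots,A_d$ of the column space, bound $\E\norm{\ellonesketch A_j}{1} = O(\log n)\,\norm{A_j}{1}$ level by level, and apply Markov to get a simultaneous $\poly(d\log n)$ dilation bound over the entire subspace with constant probability. That crude bound suffices to control the tail of the successive-approximation series once the net parameter $\epsilon$ is taken to be $1/\poly(d\log n)$ rather than an absolute constant (costing only logarithmic factors in the exponent of the net size). Constant-factor no-dilation is then asserted only for a single fixed vector (e.g., the regression optimum) with constant probability, which is also how the corollary's upper bound should be read; the uniform constant-factor upper bound over the whole subspace that your argument aims to establish is stronger than what this sketch actually provides.
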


\paragraph{Combining the bounds}

\begin{proof}
    For $h\geq 0$, we set $p_h = q^h$ for a fixed $q$ and set $T_h = T$ for a fixed value of $T$, both to be determined.  For notational convenience we will also define a layer $-1$ of the sketch with $p_{-1} = q^{-1}$ and $T_{-1}=T.$  While $q^{-1}$ is larger than $1,$ $T$ will be chosen so that $q^{-1}/T < 1$.

    \paragraph{Contraction Bound.} Divide the coordinates of $x$ into levels $L_i$ for $i\geq 0$, with $L_i$ consisting of the coordinates of $x$ in $(q^{i+1},q^i]$ and all other coordinates replaced with $0.$  We say that a level $L_i$ is heavy if $\norm{x_{L_i}}{1} \geq 1/(10\ell).$

    Suppose that $L_i$ is heavy.  We will choose parameters so that level $i-1$ of the sketch, preserves a constant fraction of $L_i$'s mass.  Setting $p=q^{i-1}$ in Lemma~\ref{lem:no_contraction}, the conditions we need are $q^{i-1}|L_i|/T \leq 1/4$, $T \geq c\log(1/\delta)$ and $q^{i-1} \geq c\log(1/\delta)\frac{q^i}{\norm{x_{L_i}}{1}}.$  Under the assumption that $L_i$ is heavy, we have $\norm{x_{L_i}}{1} \geq \frac{1}{10\ell}$.  Also $\abs{L_i} \leq 1/q^{i+1}$ since $\norm{x}{1}=1.$  Therefore it suffices to choose parameters so that the following relations hold:
    \begin{enumerate}
    \item $\frac{1}{q} \geq c (\log\frac{1}{\delta})\ell$
    \item $T \geq \frac{c}{q^2}$
    \end{enumerate}
    (Note that the condition $T \geq c \log\frac{1}{\delta}$ is redundant in light of the two listed.)

    \paragraph{Dilation Bound.} Next we bound the dilation of the sketch on $x$ with $\norm{x}{1} = 1.$  In the dilation bound above, we choose $\alpha_i = p^{i+3}$ and $\beta_i = p^{i-1}.$

    Then the first sum in the dilation bound Lemma~\ref{lem:general_dilation_bound}, is bounded by $4$ since each coordinate of $x$ appears in the sum at most $4$ times.  For the second sum, we have
    \[
    \sum_{i=1}^{\ell} \min(\sqrt{T_i\alpha_i/p_i}, 1) \leq
    \sum_{i=1}^{\ell} \sqrt{Tq^{i+3}/q^i}
    = \sum_{i=1}^{\ell} \sqrt{T q^3}.
    \]
    We will choose parameters so that $Tq^3 < 1/\ell^2$, so that this sum will be bounded by $1.$   Then
    \[
    \sum_{i=0}^{\ell}\min(\sqrt{T_i\alpha_i/p_i}, 1) \leq 1 + 1 = 2.
    \]

    We therefore get an $O(1)$ dilation bound with failure probability at most
    $\frac{1}{10} + \sum_{i=1}^{\ell}\frac{q^i}{q^{i-1}} \leq q \ell.$ Thus we must choose $q \leq c/\ell.$

\paragraph{Choosing Parameters.} Taking stock of our constraints on parameters, we have
\begin{enumerate}
\item $\frac{1}{q} \geq c(\log\frac{1}{\delta})\ell$
\item $T \geq \frac{c}{q^2}$
\item $Tq^3 < 1/\ell^2$
\item $q\leq \frac{c}{\ell}$.
\end{enumerate}
These can be satisfied by setting $\frac{1}{q} = c \max(\log\frac{1}{\delta} \ell, \ell^2)$ and $T = c/q^2.$  Since $\ell \leq \log n$ this yields an overall bound of $T = O(\log^4 n + \log^2(1/\delta)\log^2 n).$  Since there are $T$ buckets in each of $\ell$ levels, this gives the desired bound on the sketching dimension.

Finally the time bounds follow by using our construction of $p$-samples that admit fast summation: we first apply the $\sigma_i$'s given in our construction, which takes $O(n)$ time.  We then compute the sum of the resulting tensor entries over a $p$-sample that admits $\tilde{O}(n)$ summation. This takes $O(n)$  time in the two mode case, and $O(n\log^2 n)$ time in the three mode case.
\end{proof}

\section{Proof of Lemma 2.9}
Let $i\in \supp(X)$ By \ref{prop:basic_psample_facts}, we have 
\[
\Pr(S\cap \supp(X) = \{i\}) \geq \frac{1}{2}p(1 - 2p|\supp(X)|)
\geq \frac{1}{4}p
.
\] 
From the definition of a $p$-sample we have 
\[
\Pr(S\cap \supp(X) = \{i\}) \leq \Pr(i \in S) \leq p.
\]
Also from \ref{prop:basic_psample_facts}, 
\begin{align*}
\Pr(|S\cap \supp(X)|=1) &\geq \frac{1}{2}|\supp(X)|p(1 - 2p|\supp(X)|)\\
&\geq \frac14 \abs{\supp(X)}p.
\end{align*}
By taking a union bound over $i\in \supp(X),$ we have 
\[
\Pr(|S\cap \supp(X)|=1) \leq |\supp(X)|p.
\]
It follows from this that
\begin{align*}
\Pr(i \in S \big| |S\cap \supp(X)| = 1)
&= \frac{\Pr(S\cap \supp(X) = \{i\})}{\Pr(|S\cap \supp(X)| = 1)}\\
&\geq \frac{(1/4)p}{|\supp(X) p|} = \frac{1}{4|\supp(X)|}
\end{align*}
and 
\begin{align*}
\Pr(i \in S \big| |S\cap \supp(X)| = 1)
&\leq \frac{p}{(1/4)|\supp(X)|p}\\
&= \frac{4}{|\supp(X)|}.
\end{align*}


\section{Experiment Data}
\label{sec:experiment_data}

All experiments are carried out on a tensor of shape $40\times 40 \times 40$ for $1000$ trials.  For our $\ell_0$ sampler, we take our levels to have sampling probabilities increasing in powers of $5.$  We additionally use $10$ buckets per level.  The following tables give the raw data for our experiments. 

\begin{table}
\centering
\begin{tabular}{|c|c|c|c|}
\hline
Rectangle dimensions & Fraction in first rectangle & Expected Fraction & Failures \\
\hline
((1, 1, 20), (1, 1, 1)) & 0.9650 & 0.9524  & 0 \\
((1, 10, 20), (1, 1, 1)) & 0.9970 & 0.9950  & 0 \\
((1, 20, 20), (1, 1, 1)) & 0.9980 & 0.9975  & 1 \\
((20, 20, 20), (1, 1, 1)) & 1.0000 & 0.9999  & 1 \\
((1, 1, 20), (10, 10, 10)) & 0.0200 & 0.0196  & 0 \\
((1, 10, 20), (10, 10, 10)) & 0.1610 & 0.1667  & 0 \\
((1, 20, 20), (10, 10, 10)) & 0.2970 & 0.2857  & 0 \\
((20, 20, 20), (10, 10, 10)) & 0.8959 & 0.8889  & 1 \\
((1, 1, 20), (20, 20, 20)) & 0.0010 & 0.0025  & 3 \\
((1, 10, 20), (20, 20, 20)) & 0.0160 & 0.0244  & 2 \\
((1, 20, 20), (20, 20, 20)) & 0.0500 & 0.0476  & 0 \\
((20, 20, 20), (20, 20, 20)) & 0.5130 & 0.5000  & 0 \\
\hline
\end{tabular}
\caption{In each experiment, the support consists of two disjoint boxes.  The dimensions of the two boxes are given in the leftmost column.  The table shows the fraction of points that our $\ell_0$ sampler chooses in the first rectangle, as well as the fraction that we expect for a perfect $\ell_0$ sampler.  The rightmost column records the number of times our $\ell_0$ sampler failed.}
\label{table:disjoint_rects}
\end{table}

\begin{longtable}[h!]{|c|c|c|}
\hline
Rectangle shape & Fraction in rectangle & Failures \\
\hline
(1, 1, 1) & 0.47 & 5 \\
(1, 1, 3) & 0.51 & 0 \\
(1, 1, 9) & 0.52 & 2 \\
(1, 1, 27) & 0.52 & 0 \\
(1, 3, 1) & 0.52 & 1 \\
(1, 3, 3) & 0.51 & 4 \\
(1, 3, 9) & 0.50 & 0 \\
(1, 9, 1) & 0.51 & 3 \\
(1, 9, 3) & 0.51 & 0 \\
(1, 9, 9) & 0.50 & 0 \\
(3, 1, 1) & 0.49 & 1 \\
(3, 1, 3) & 0.50 & 1 \\
(3, 1, 9) & 0.52 & 0 \\
(3, 3, 1) & 0.49 & 0 \\
(3, 3, 3) & 0.53 & 0 \\
(3, 3, 9) & 0.51 & 0 \\
(3, 9, 1) & 0.51 & 0 \\
(3, 9, 3) & 0.49 & 0 \\
(3, 9, 9) & 0.51 & 0 \\
(9, 1, 1) & 0.53 & 1 \\
(9, 1, 3) & 0.50 & 0 \\
(9, 1, 9) & 0.51 & 0 \\
(9, 3, 1) & 0.51 & 0 \\
(9, 3, 3) & 0.49 & 0 \\
(9, 3, 9) & 0.53 & 0 \\
(9, 9, 1) & 0.54 & 0 \\
(9, 9, 3) & 0.50 & 1 \\
(9, 9, 9) & 0.52 & 1 \\
(1, 3, 27) & 0.49 & 1 \\
(1, 9, 27) & 0.50 & 0 \\
(1, 27, 1) & 0.51 & 0 \\
(1, 27, 3) & 0.50 & 0 \\
(1, 27, 9) & 0.50 & 0 \\
(1, 27, 27) & 0.48 & 0 \\
(3, 1, 27) & 0.52 & 1 \\
(3, 3, 27) & 0.51 & 1 \\
(3, 9, 27) & 0.53 & 0 \\
(3, 27, 1) & 0.51 & 0 \\
(3, 27, 3) & 0.50 & 0 \\
(3, 27, 9) & 0.53 & 2 \\
(3, 27, 27) & 0.47 & 3 \\
(9, 1, 27) & 0.48 & 1 \\
(9, 3, 27) & 0.53 & 0 \\
(9, 9, 27) & 0.51 & 1 \\
(9, 27, 1) & 0.50 & 1 \\
(9, 27, 3) & 0.47 & 1 \\
(9, 27, 9) & 0.50 & 1 \\
(9, 27, 27) & 0.49 & 0 \\
(27, 1, 1) & 0.53 & 0 \\
(27, 1, 3) & 0.54 & 0 \\
(27, 1, 9) & 0.52 & 2 \\
(27, 1, 27) & 0.50 & 2 \\
(27, 3, 1) & 0.52 & 0 \\
(27, 3, 3) & 0.49 & 2 \\
(27, 3, 9) & 0.50 & 0 \\
(27, 3, 27) & 0.50 & 0 \\
(27, 9, 1) & 0.50 & 1 \\
(27, 9, 3) & 0.50 & 0 \\
(27, 9, 9) & 0.51 & 0 \\
(27, 9, 27) & 0.53 & 0 \\
(27, 27, 1) & 0.47 & 0 \\
(27, 27, 3) & 0.51 & 1 \\
(27, 27, 9) & 0.48 & 0 \\
(27, 27, 27) & 0.48 & 0 \\
\hline
\caption{Each row of the table corresponds to an experiment run on a support shape consisting of a box of dimension specified by the left column of the table, as well as an equal number of additional uniformly sampled entries.  We run $1000$ trials with our $\ell_0$ sampler, and record the fraction of samples from successful runs of the sampler that are in the rectangle.  The rightmost column records the number of times the sampler failed.}
\label{tabel:rect_and_rand}
\end{longtable}

\end{document}